\newtheorem{thm}{Theorem}
\newtheorem{lemma}{Lemma}
\newtheorem{pro}{Proposition}
\newtheorem{rk}{Remark}
\newtheorem{cor}{Corollary}
\numberwithin{equation}{section} \setcounter{tocdepth}{1}
\begin{document}
\title[Translation-invariant $p$-adic Gibbs measures for the Potts model]{
Description of all translation-invariant $p$-adic Gibbs measures
for the Potts model on a Cayley tree}

\author{U. A. Rozikov, O. N. Khakimov}

\address{U.\ A.\ Rozikov\\ Institute of mathematics,
29, Do'rmon Yo'li str., 100125, Tashkent, Uzbekistan.}
\email {rozikovu@yandex.ru}

\address{O.N. Khakimov\\ Institute of mathematics,
29, Do'rmon Yo'li str., 100125, Tashkent, Uzbekistan.}
\email {hakimovo@mail.ru}

\begin{abstract} Recently it was proved that usual (real) Potts model on a Cayley tree has up
to $2^q-1$ translation-invariant Gibbs measures. This paper is devoted to description of translation-
invariant $p$-adic Gibbs measures (TIpGMs) of the $p$-adic Potts model. In particular, for the Cayley tree
of order two we give exact number of such measures. Mereover we give criterion of boundedness of TIpGMs

\end{abstract}
\maketitle

{\bf{Key words.}} $p$-adic number, $p$-adic Potts model, Cayley tree,
$p$-adic Gibbs measure.

\section{introduction}

The $p$-adic numbers were first introduced by the German
mathematician K.Hensel. For about a century after the discovery of
$p$-adic numbers, they were mainly considered objects of pure
mathematics. However, numerous applications of these numbers to
theoretical physics have been proposed papers \cite{14},\cite{49}
to quantum mechanics and to $p$-adic valued physical
observables \cite{k91}. A number of $p$-adic models in physics
cannot be described using ordinary probability theory based on the
Kolmogorov axioms.

In \cite{33} a theory of stochastic processes with values in
$p$-adic and more general non-Archimedean fields was developed,
having probability distributions with non-Archimedean values.

One of the basic branches of mathematics lying at the base of the
theory of statistical mechanics is the theory of probability and
stochastic processes. Since the theories of probability and
stochastic processes in a non-Archimedean setting have been
introduced, it is natural to study problems of statistical
mechanics in the context of the $p$-adic theory of probability.

We note that $p$-adic Gibbs measures were studied for several $p$-adic models of statistical mechanics \cite{GRR,16,TMP2},\cite{{MRasos,37,MNGM,Fg,RK,bookroz}}.
It is known that \cite{MRasos} there exist phase transition for the $q$-state $p$-adic Potts model on the Cayley tree of order $k$ if and only if $q\in p\mathbb N$. In this paper, we shall fully describe the set of TIpGMs for the $q$-state Potts model on a Cayley tree of order two.

Our main result of this paper
is the characterization and counting of TIpGMs which is given in Theorems \ref{tigm1} and \ref{tigm2}. Let us outline the proof.
Our analysis is based on a systematic investigation of the tree recursion for boundary fields
(boundary laws) whose fixed points are
characterizing the TIpGMs. In this analysis we find all fixed points.
We show that these fixed points can be characterized according to the number of their non-zero
components, see Theorem \ref{Theorem2}. Care is needed, since not all of these solutions give rise to different Gibbs measures, and we have to take into account of symmetries in a proper way
when going from the full description of fixed points to the full description of TIpGMs. \bigskip

\section{definitions and preliminary results}

\subsection{\bf $p$-adic numbers and measures.}

Let $\mathbb Q$ be the field of rational numbers.
For a fixed prime number $p$, every rational number $x\ne0$ can
be represented in the form $x = p^r{n\over m}$, where $r,n\in\mathbb Z$, $m$ is a positive integer, and $n$ and $m$ are relatively
prime with $p$: $(p, n) = 1$, $(p, m) = 1$. The $p$-adic norm of
$x$ is given by
$$
|x|_p= \left\{\begin{array}{ll}
p^{-r},& \text{ if }  x\neq 0,\\
0,& \text{ if } x=0.
 \end{array}\right.
$$

This norm is non-Archimedean  and satisfies the so called strong
triangle inequality
$$|x+y|_p\leq \max\{|x|_p,|y|_p\}.$$

From this property immediately follow the following facts:

1) if  $|x|_p\neq |y|_p$, then $|x-y|_p=\max\{|x|_p,|y|_p\}$;

2) if  $|x|_p=|y|_p$, then  $|x-y|_p\leq |x|_p$;

The completion of $\mathbb Q$ with respect to the $p$-adic norm defines
the $p$-adic field $\mathbb Q_p$ (see \cite{29}).

The completion of the field of rational numbers $\mathbb Q$ is either
the field of real numbers $\mathbb R$ or one of the fields of
$p$-adic numbers $\mathbb Q_p$ (Ostrowski's theorem).

Any $p$-adic number $x\ne 0$ can be uniquely represented
in the canonical form
\begin{equation}\label{ek}
x = p^{\gamma(x)}(x_0+x_1p+x_2p^2+\dots),
\end{equation}
where $\gamma=\gamma(x)\in \mathbb Z$ and the integers $x_j$ satisfy: $x_0 > 0$,
$0\leq x_j \leq p - 1$ (see \cite{29,sc,48}). In this case $|x|_p =
p^{-\gamma(x)}$.

\begin{thm}\label{tx2}\cite{48}
The equation
$x^2 = a$, $0\ne a =p^{\gamma(a)}(a_0 + a_1p + ...), 0\leq a_j
\leq p - 1$, $a_0 > 0$ has a solution $x\in \mathbb Q_p$ iff hold true the following:

i) $\gamma(a)$ is even;

ii) $y^2=a_0(\operatorname{mod} p)$ is solvable for $p\ne 2$;
the equality $a_1=a_2=0$ holds if $p=2$.
\end{thm}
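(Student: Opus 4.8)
The plan is to peel off the power of $p$, reducing the problem to squaring a unit, and then to split according to whether $p$ is odd or $p=2$; the even prime is where the real work lies. First I would establish the role of condition (i). If $x^2=a$ then taking $p$-adic norms gives $|x|_p^2=|a|_p$, i.e. $p^{-2\gamma(x)}=p^{-\gamma(a)}$, so that $\gamma(a)=2\gamma(x)$ is necessarily even. Conversely, assuming $\gamma(a)=2m$ is even, I write $a=p^{2m}u$ where $u=a_0+a_1p+a_2p^2+\cdots$ is a unit, and substitute $x=p^{m}y$; then $x^2=a$ is equivalent to $y^2=u$ with $y$ a unit. Hence the problem reduces to deciding exactly when a $p$-adic unit is a square: condition (i) is seen to be necessary, and under (i) the original solvability coincides with solvability of $y^2=u$.

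For $p\neq 2$ I would settle this by Hensel's lemma applied to $f(y)=y^2-u$. For necessity, reducing $y^2=u$ modulo $p$ yields $y_0^2\equiv a_0\pmod p$, which is condition (ii). For sufficiency, if $y_0^2\equiv a_0\pmod p$ is solvable then, since $a_0>0$ forces $y_0\not\equiv 0$ and $p$ odd forces $f'(y_0)=2y_0\not\equiv 0\pmod p$, the simple root $y_0$ of $f$ modulo $p$ lifts uniquely to a root $y\in\mathbb Z_p$; this $y$ is a unit and gives $x=p^my$.

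The main obstacle is the case $p=2$, where $f'(y)=2y$ is divisible by $p$ for every unit $y$, so the naive Hensel lifting breaks down and condition (ii) must change shape. Here $a_0=1$ is automatic, being the only admissible positive digit $\le p-1=1$. For necessity I would note that every odd $y=1+2c$ satisfies $y^2=1+4c(c+1)\equiv 1\pmod 8$, because $c(c+1)$ is even; hence $u\equiv 1\pmod 8$, meaning the digits of $u$ at $p$ and $p^2$ vanish, which is exactly $a_1=a_2=0$. For sufficiency I would appeal to the strong (quadratic) form of Hensel's lemma: the condition $a_1=a_2=0$ means $|u-1|_2\le 2^{-3}<2^{-2}=|f'(1)|_2^2$, so starting from $y_0=1$ the refined lifting converges to a square root $y\in\mathbb Z_2$. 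Equivalently, one can build the binary digits of $y$ one at a time and verify that the residue-mod-$8$ condition is precisely what keeps the recursion solvable at each step. Packaging the odd and even cases together then yields the stated criterion.
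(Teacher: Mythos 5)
Your proof is correct: the reduction $a=p^{2m}u$, $x=p^m y$ to the unit case, Hensel's lemma for odd $p$ (where $a_0>0$ guarantees $2y_0\not\equiv 0\bmod p$), and for $p=2$ the observation that odd squares are $\equiv 1 \pmod 8$ together with the quadratic form of Hensel's lemma at $y_0=1$ is exactly the standard argument. The paper itself offers no proof of this statement, citing it from the reference of Vladimirov--Volovich--Zelenov, and your argument is essentially the one given there, so there is nothing further to compare.
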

For $a\in \mathbb Q_p$ and $r> 0$ we denote
$$B(a, r) = \{x\in \mathbb Q_p : |x-a|_p < r\}.$$

$p$-adic {\it logarithm} is defined by the series
$$\log_p(x) =\log_p(1 + (x-1)) =
\sum_{n=1}^\infty (-1)^{n+1}{(x-1)^n\over n},$$ which converges
for  $x\in B(1, 1)$;  $p$-adic {\it exponential} is defined by
$$\exp_p(x) =\sum^\infty_{n=0}{x^n\over n!},$$
which converges
for $x \in B(0, p^{-1/(p-1)})$.
\begin{lemma}\label{el} Let $x\in B(0, p^{-1/(p-1)})$. Then
$$|\exp_p(x)|_p = 1,\ \ |\exp_p(x)-1|_p = |x|_p, \ \ |\log_p(1 + x)|_p = |x|_p,$$
$$\log_p(\exp_p(x)) = x,\ \ \exp_p(\log_p(1 + x)) = 1 + x.$$
\end{lemma}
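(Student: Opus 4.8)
The plan is to reduce every claim to two arithmetic facts about $p$-adic valuations and then let the ultrametric inequality locate the dominant term of each series. Write $|x|_p = p^{-\gamma}$; since $x\in B(0,p^{-1/(p-1)})$ we have $\gamma>1/(p-1)$. The two inputs I would record first are Legendre's formula $\gamma(n!)=(n-s_p(n))/(p-1)$, where $s_p(n)\ge 1$ is the base-$p$ digit sum for $n\ge 1$, and the bound $\gamma(n)\le (n-1)/(p-1)$ for $n\ge 1$, which follows from $p^{\gamma(n)}\le n$ together with $p^k-1\ge (p-1)k$. Everything else is bookkeeping with these.

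Next I would establish the three norm identities. For $|\exp_p(x)|_p=1$, estimate $|x^n/n!|_p = p^{-n\gamma+(n-s_p(n))/(p-1)} < p^{-s_p(n)/(p-1)}\le 1$ for $n\ge 1$, so the $n=0$ term strictly dominates and the ultrametric property forces the value $1$. For $|\exp_p(x)-1|_p=|x|_p$ and $|\log_p(1+x)|_p=|x|_p$, I would show the $n=1$ term strictly dominates all later terms, i.e. $|x^n/n!|_p<|x|_p$ and $|x^n/n|_p<|x|_p$ for $n\ge 2$; after taking logarithms of norms these reduce to $(n-1)\gamma>(n-s_p(n))/(p-1)$ and $(n-1)\gamma>\gamma(n)$, both of which hold since $\gamma>1/(p-1)$ and by the valuation bounds above. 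Because the general terms tend to $0$ the series converge, and the ultrametric pins each total norm to that of its leading term.

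For the two composition identities I would argue as follows. The norm computations just obtained give $\log_p(1+x)\in B(0,p^{-1/(p-1)})$ and $\exp_p(x)-1\in B(0,p^{-1/(p-1)})\subset B(0,1)$, so in each case the outer function is being evaluated strictly inside its disk of convergence; hence both compositions are legitimately defined. It then remains to transport the formal power-series identities $\log(\exp(X))=X$ and $\exp(\log(1+X))=1+X$ over $\mathbb Q$ to genuine $p$-adic equalities.

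The main obstacle is precisely this last passage: justifying that the formal identities survive $p$-adic substitution requires interchanging the order of summation in a double series. I would handle it by verifying that the doubly-indexed family of terms is summable in the non-Archimedean sense, namely that its general term tends to $0$; in an ultrametric field this licenses arbitrary rearrangement and makes the substitution rigorous, after which the formal cancellation forces the stated values. Equivalently one may invoke the standard fact that $\exp_p$ and $\log_p$ are mutually inverse isometries between $B(0,p^{-1/(p-1)})$ and $B(1,p^{-1/(p-1)})$.
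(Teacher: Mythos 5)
Your proposal is correct, but note that the paper does not prove this lemma at all: it is stated as a standard fact of $p$-adic analysis and the reader is referred to the cited textbooks (Koblitz; Schikhof; Vladimirov--Volovich--Zelenov) for details. What you have written is essentially the standard textbook argument from those sources, and all the estimates check out: Legendre's formula $v_p(n!)=(n-s_p(n))/(p-1)$ together with $v_p(n)\le (n-1)/(p-1)$ and the hypothesis $v_p(x)>1/(p-1)$ do make the $n=0$ (resp.\ $n=1$) term strictly dominant in each series, and the ultrametric inequality then yields the three norm identities; these in turn place $\log_p(1+x)$ and $\exp_p(x)-1$ inside the relevant disks so that the compositions converge. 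Two small remarks. First, you should dispose of $x=0$ separately (all five claims are trivial there), since writing $|x|_p=p^{-\gamma}$ presupposes $x\ne 0$. Second, the rearrangement step for the composition identities is only outlined, not executed; it does go through, because a typical term of the expanded double series has valuation at least $N\bigl(\gamma-\tfrac{1}{p-1}\bigr)+\tfrac{1}{p-1}$ where $N$ is the total degree in $x$, which tends to infinity, so the family is summable and may be rearranged freely in a complete non-Archimedean field. Since the lemma is quoted rather than proved in the paper, your write-up supplies strictly more detail than the source demands; it contains no gap that would invalidate it.
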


A more detailed description of $p$-adic calculus and $p$-adic mathematical physics can be found in \cite{29,sc,48}.

Let $(X,{\mathcal B})$ be a measurable space, where ${\mathcal B}$
is an algebra of subsets of $X$. A function $\mu: {\mathcal B}\to
\mathbb Q_p$ is said to be a $p$-adic measure if for any $A_1, . . .
,A_n\in {\mathcal B}$ such that $A_i\cap A_j = \emptyset$, $i\ne
j$, the following holds:
$$\mu(\bigcup^n_{j=1}A_j)=\sum^n_{j=1}\mu(A_j).$$
A $p$-adic measure is called a probability measure if $\mu(X) =
1$. A $p$-adic probability measure $\mu$ is called {\it bounded}
if $\sup\{|\mu(A)|_p:A\in\mathcal B\}<\infty$ (see, \cite{k91}).

We call a $p$-adic measure a probability measure \cite{16} if $\mu(X)=1$.

\subsection{\bf Cayley tree.}

The Cayley tree $\Gamma^k$
of order $ k\geq 1 $ is an infinite tree, i.e., a graph without
cycles, such that exactly $k+1$ edges originate from each vertex.
Let $\Gamma^k=(V, L)$ where $V$ is the set of vertices and  $L$ the set of edges.
Two vertices $x$ and $y$ are called {\it nearest neighbors} if there exists an
edge $l \in L$ connecting them.
We shall use the notation $l=\langle x,y\rangle$.
A collection of nearest neighbor pairs $\langle x,x_1\rangle, \langle x_1,x_2\rangle,...,\langle x_{d-1},y\rangle$ is called a {\it
path} from $x$ to $y$. The distance $d(x,y)$ on the Cayley tree is the number of edges of the shortest path from $x$ to $y$.

For a fixed $x^0\in V$, called the root, we set
\begin{equation*}
W_n=\{x\in V\,| \, d(x,x^0)=n\}, \qquad V_n=\bigcup_{m=0}^n W_m
\end{equation*}
and denote
$$
S(x)=\{y\in W_{n+1} :  d(x,y)=1 \}, \ \ x\in W_n, $$ the set  of {\it direct successors} of $x$.

Let $G_k$ be a free product of $k + 1$ cyclic groups of the second order with generators $a_1, a_2,\dots, a_{k+1}$,
respectively.
It is known that there exists a one-to-one correspondence between the set of vertices $V$ of the Cayley tree $\Gamma^k$ and the group $G_k$.

\subsection{\bf $p$-adic Potts model}

Let $\mathbb Q_p$ be the field of $p$-adic numbers and
$\Phi$ be a finite set. A configuration $\sigma$ on $V$ is then defined
as a function $x\in V\to\sigma(x)\in\Phi$; in a similar fashion
one defines a configuration $\sigma_n$ and $\sigma^{(n)}$ on $V_n$
and $W_n$ respectively. The set of all configurations on $V$
(resp. $V_n,\ W_n$) coincides with $\Omega=\Phi^V$
(resp.$\Omega_{V_n}=\Phi^{V_n},\ \Omega_{W_n}=\Phi^{W_n}$). Using
this, for given configurations $\sigma_{n-1}\in\Omega_{V_{n-1}}$
and $\sigma^{(n)}\in\Omega_{W_n}$ we define their concatenations
by
$$
(\sigma_{n-1}\vee\sigma^{(n)})(x)=\left\{\begin{array}{ll}
\sigma_{n-1}(x),& \text{if}\  x\in V_{n-1},\\
\sigma^{(n)}(x),& \text{if}\  x\in W_n.
 \end{array}\right.
$$
It is clear that $\sigma_{n-1}\vee\sigma^{(n)}\in\Omega_{V_n}.$\\

Let $G^*_k$ be a subgroup of the group $G_k$. A function $h_x$ (for example, a configuration $\sigma(x)$) of $x\in G_k$ is called $G^*_k$-periodic if $h_{yx}=h_x$ (resp. $\sigma(yx)=\sigma(x)$) for any $x\in G_k$ and
$y\in G^*_k.$

A $G_k$-periodic function is called {\it translation-invariant}.

We consider {\it $p$-adic Potts model} on a Cayley tree,
where the spin takes values in the set
$\Phi:=\{1,2,\dots,q\}$, and is assigned to the vertices
of the tree.

The (formal) Hamiltonian of $p$-adic Potts model is
\begin{equation}\label{ph}
H(\sigma)=J\sum_{\langle x,y\rangle\in L}
\delta_{\sigma(x)\sigma(y)},
\end{equation}
where $J\in B(0, p^{-1/(p-1)})$ is a coupling constant,
$\langle x,y\rangle$ stands for nearest neighbor vertices and $\delta_{ij}$ is the Kroneker's
symbol:
$$\delta_{ij}=\left\{\begin{array}{ll}
0, \ \ \mbox{if} \ \ i\ne j\\[2mm]
1, \ \ \mbox{if} \ \ i= j.
\end{array}\right.
$$

\subsection{\bf $p$-adic Gibbs measure}
Define a finite-dimensional distribution of a $p$-adic probability measure $\mu$ in the volume $V_n$ as
\begin{equation}\label{p*}
\mu_{\tilde h}^{(n)}(\sigma_n)=Z_{n,\tilde h}^{-1}\exp_p\left\{H_n(\sigma_n)+\sum_{x\in W_n}{\tilde h}_{\sigma(x),x}\right\},
\end{equation}
where  $Z_{n,\tilde h}$ is the normalizing factor, $\{{\tilde h}_x=({\tilde h}_{1,x},\dots, {\tilde h}_{q,x})\in\mathbb Q_p^q, x\in V\}$ is a collection of vectors and
$H_n(\sigma_n)$ is the restriction of Hamiltonian on $V_n$.

We say that the $p$-adic probability distributions (\ref{p*}) are compatible if for all
$n\geq 1$ and $\sigma_{n-1}\in \Phi^{V_{n-1}}$:
\begin{equation}\label{p**}
\sum_{\omega_n\in \Phi^{W_n}}\mu_{\tilde h}^{(n)}(\sigma_{n-1}\vee \omega_n)=\mu_{\tilde h}^{(n-1)}(\sigma_{n-1}).
\end{equation}
Here $\sigma_{n-1}\vee \omega_n$ is the concatenation of the configurations.\\
We note that an analog of the Kolmogorov extension theorem for distributions can be proved for $p$-adic distributions given by (\ref{p*}) (see \cite{16}). According to this theorem there exists a unique $p$-adic measure $\mu_{\tilde h}$ on $\Omega=\Phi^V$ such that,
for all $n$ and $\sigma_n\in \Omega_{V_n}$,
$$\mu_{\tilde h}(\{\sigma|_{V_n}=\sigma_n\})=\mu_{\tilde h}^{(n)}(\sigma_n).$$

Such a measure is called a {\it $p$-adic Gibbs measure} (pGM) corresponding to the Hamiltonian (\ref{ph}) and vector-valued function ${\tilde h}_x, x\in V$.

The following statement describes conditions on ${\tilde h}_x$ guaranteeing compatibility of $\mu_{\tilde h}^{(n)}(\sigma_n)$.

\begin{thm}\label{ep} \label{Theorem1} (see \cite[p.89]{MRasos}) The $p$-adic probability distributions
$\mu_n(\sigma_n)$, $n=1,2,\ldots$, in
(\ref{p*}) are compatible for Potts model iff for any $x\in V\setminus\{x^0\}$
the following equation holds:
\begin{equation}\label{p***}
 h_x=\sum_{y\in S(x)}F(h_y,\theta),
\end{equation}
where $F: h=(h_1, \dots,h_{q-1})\in\mathbb Q_p^{q-1}\to F(h,\theta)=(F_1,\dots,F_{q-1})\in\mathbb Q_p^{q-1}$ is defined as
$$F_i=\log_p\left({(\theta-1)\exp_p(h_i)+\sum_{j=1}^{q-1}\exp_p(h_j)+1\over \theta+ \sum_{j=1}^{q-1}\exp_p(h_j)}\right),$$
$\theta=\exp_p(J)$, $S(x)$ is the set of direct successors of $x$ and $h_x=\left(h_{1,x},\dots,h_{q-1,x}\right)$ with
\begin{equation}\label{hh}
h_{i,x}={\tilde h}_{i,x}-{\tilde h}_{q,x}, \ \ i=1,\dots,q-1.
\end{equation}
\end{thm}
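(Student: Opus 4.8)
The plan is to unwind the compatibility condition (\ref{p**}) directly from the explicit form (\ref{p*}), reducing it to a product over the successors of each vertex and then stripping off the normalizations. First I would split the Hamiltonian along the last generation: since the edges of $V_n$ are those of $V_{n-1}$ together with the edges joining $W_{n-1}$ to $W_n$, one has
\[
H_n(\sigma_{n-1}\vee\omega_n)=H_{n-1}(\sigma_{n-1})+J\sum_{x\in W_{n-1}}\sum_{y\in S(x)}\delta_{\sigma(x)\omega(y)}.
\]
Because $J\in B(0,p^{-1/(p-1)})$ and all the field values $\tilde h_{i,y}$ lie in the convergence ball of $\exp_p$, the strong triangle inequality keeps every partial sum in that ball, so Lemma \ref{el} lets me replace $\exp_p$ of a sum by a product of $\exp_p$'s. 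Consequently $\mu^{(n)}_{\tilde h}(\sigma_{n-1}\vee\omega_n)$ factors into $Z_{n,\tilde h}^{-1}\exp_p(H_{n-1}(\sigma_{n-1}))$ times a product over pairs $(x,y)$ with $x\in W_{n-1}$ and $y\in S(x)$.

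Next I would perform the sum over $\omega_n\in\Phi^{W_n}$. Each $y\in W_n$ has a unique predecessor, so the sum factorizes across successors and produces, for each pair $(x,y)$, the factor $\sum_{i=1}^q\exp_p(J\delta_{\sigma(x)i})\exp_p(\tilde h_{i,y})$. Writing $\theta=\exp_p(J)$ and isolating the term $i=\sigma(x)$, this equals $(\theta-1)\exp_p(\tilde h_{\sigma(x),y})+\sum_{i=1}^q\exp_p(\tilde h_{i,y})$. Comparing with $\mu^{(n-1)}_{\tilde h}(\sigma_{n-1})=Z_{n-1,\tilde h}^{-1}\exp_p(H_{n-1}(\sigma_{n-1}))\prod_{x\in W_{n-1}}\exp_p(\tilde h_{\sigma(x),x})$ and cancelling the common factor $\exp_p(H_{n-1}(\sigma_{n-1}))$, compatibility becomes an identity that must hold for every $\sigma_{n-1}$.

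To eliminate the unknown ratio $Z_{n-1,\tilde h}/Z_{n,\tilde h}$ and decouple the vertices of $W_{n-1}$, I would fix a single $x\in W_{n-1}$ and take the quotient of the identity for the configuration with $\sigma(x)=k$ against the one with $\sigma(x)=q$; all factors attached to other vertices, together with the normalization ratio, cancel, leaving
\[
{\exp_p(\tilde h_{k,x})\over\exp_p(\tilde h_{q,x})}=\prod_{y\in S(x)}{(\theta-1)\exp_p(\tilde h_{k,y})+\sum_{i=1}^q\exp_p(\tilde h_{i,y})\over(\theta-1)\exp_p(\tilde h_{q,y})+\sum_{i=1}^q\exp_p(\tilde h_{i,y})}.
\]
Dividing numerator and denominator on the right by $\exp_p(\tilde h_{q,y})$, using the normalization (\ref{hh}) together with $\exp_p(h_{q,y})=1$, and finally applying $\log_p$ (which, by Lemma \ref{el}, inverts $\exp_p$ on the relevant ball), the left side becomes $h_{k,x}$ and the right side becomes $\sum_{y\in S(x)}F_k(h_y,\theta)$, which is precisely (\ref{p***}). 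The converse direction runs the same computation backwards: assuming (\ref{p***}), I would define $Z_{n,\tilde h}$ recursively through the per-vertex factors and verify (\ref{p**}) directly.

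The main obstacle is not the algebra but the $p$-adic bookkeeping: every use of the multiplicativity of $\exp_p$, of the identity $|\exp_p(x)-1|_p=|x|_p$, and of the inversion $\log_p\circ\exp_p=\mathrm{id}$ is legitimate only inside the balls of Lemma \ref{el}. Hence the real work is to confirm that the arguments stay in range — in particular that each fraction fed into $\log_p$ lies in $B(1,1)$, which forces control on $|\theta-1|_p=|J|_p$, on the $p$-adic size of $q$, and on the partial sums $\sum_{j}\exp_p(h_{j,y})$. These domain constraints are exactly where the $p$-adic argument departs from the classical real computation and where compatibility would fail if the fields ever left the admissible region.
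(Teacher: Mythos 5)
The paper does not actually prove this theorem: it is quoted verbatim from \cite[p.~89]{MRasos}, so there is no in-paper argument to compare against. Your outline reproduces the standard derivation used in that reference — split $H_n$ along the last generation, use multiplicativity of $\exp_p$ to factor the sum over $\omega_n\in\Phi^{W_n}$ into per-successor factors $(\theta-1)\exp_p(\tilde h_{\sigma(x),y})+\sum_i\exp_p(\tilde h_{i,y})$, then divide the identity for $\sigma(x)=k$ by the one for $\sigma(x)=q$ to cancel the partition functions and the contributions of the other vertices, and finally normalize by the $q$-th component and apply $\log_p$ — and the algebra is correct. Two small points deserve more care than your sketch gives them. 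First, in the converse direction the ratio identity only determines the compatibility relation up to a configuration-independent constant; you need the extra observation that both sides of (\ref{p**}) are normalized ($p$-adic probability) distributions to force that constant to equal $1$, which is where the recursive definition of $Z_{n,\tilde h}$ actually enters. Second, your closing remark about domains is exactly the right worry: the quotient fed into $\log_p$ equals $1+(\theta-1)(\exp_p(h_{i,y})-1)/(\theta+\sum_j\exp_p(h_{j,y}))$, and since the denominator is $q$ plus a small perturbation, membership in $B(1,1)$ is automatic only when $|q|_p=1$; when $q\in p\mathbb N$ this is precisely the constraint that later forces solutions to lie in the set $\mathcal E_p$, so it should be stated as a hypothesis rather than checked in passing.
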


From Theorem \ref{ep} it follows that for any $h=\{h_x,\ \ x\in V\}$
satisfying (\ref{p***}) there exists a unique pGM $\mu_h$ for the $p$-adic Potts model.

\section {translation-invariant $p$-adic gibbs measures for the potts model.}

In this section, we consider $p$-adic Gibbs measures which are translation-invariant, i.e., we assume $h_x=h=(h_1,\dots,h_{q-1})\in\mathbb Q_p^{q-1}$ for all $x\in V$. Then from equation (\ref{p***}) we get $h=kF(h,\theta)$, i.e.,
\begin{equation}\label{pt}
h_i=k\log_p\left({(\theta-1)\exp_p(h_i)+\sum_{j=1}^{q-1}\exp_p(h_j)+1\over \theta+ \sum_{j=1}^{q-1}\exp_p(h_j)}\right),\ \ i=1,\dots,q-1.
\end{equation}
Denoting $z_i=\exp_p(h_i), i=1,\dots,q-1$, we get from (\ref{pt})
\begin{equation}\label{pt1}
z_i=\left({(\theta-1)z_i+\sum_{j=1}^{q-1}z_j+1\over \theta+ \sum_{j=1}^{q-1}z_j}\right)^k,\ \ i=1,\dots,q-1.
\end{equation}
Note that for a solution $z=(z_1,...,z_{q-1})$ of the system of equations (\ref{pt1}) there exists a unique  TIpGMs for the Potts model on the Cayley tree of order $k$ if and only if $z\in\mathcal E_p^{q-1}$.
\begin{thm}\label{tti}  \label{Theorem2} Let $k=2$. Then for any solution $z=(z_1,\dots,z_{q-1})$ of the system of equations (\ref{pt1}) there exists $M\subset \{1,\dots,q-1\}$ and $z^*\in\mathbb Q_p$ such that
$$z_i=\left\{\begin{array}{ll}
1, \ \ \mbox{if} \ \ i\notin M\\[3mm]
z^*, \ \ \mbox{if} \ \ i\in M.
\end{array}
\right.
$$
\end{thm}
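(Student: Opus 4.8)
The plan is to observe that, for a fixed solution $z=(z_1,\dots,z_{q-1})$, all of its components satisfy one and the same quadratic equation, and then to identify the two roots of that quadratic explicitly. Put $S=\sum_{j=1}^{q-1}z_j$; this is a single element of $\mathbb{Q}_p$ entering all $q-1$ equations in the same way. First I would take the $i$-th equation of (\ref{pt1}) with $k=2$ and clear the denominator $\theta+S$ (which must be nonzero for any genuine solution, since otherwise the right-hand side of (\ref{pt1}) is undefined), obtaining
$$(\theta+S)^2\, z_i=\big((\theta-1)z_i+S+1\big)^2 .$$
Expanding and collecting powers of $z_i$ gives
$$(\theta-1)^2 z_i^2+\big[\,2(\theta-1)(S+1)-(\theta+S)^2\,\big]z_i+(S+1)^2=0 .$$
The decisive point is that all three coefficients depend only on $S$ and $\theta$, and not on the index $i$. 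Hence every $z_i$ is a root of the single polynomial $P(t)=(\theta-1)^2 t^2+\big[2(\theta-1)(S+1)-(\theta+S)^2\big]t+(S+1)^2$.

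Next I would exhibit the two roots. A direct substitution shows $P(1)=\big[(\theta-1)+(S+1)\big]^2-(\theta+S)^2=(\theta+S)^2-(\theta+S)^2=0$, so $t=1$ is always a root, independently of $S$. Assuming $\theta\neq 1$ (so that $\exp_p(J)\neq 1$, i.e. $J\neq 0$, and the leading coefficient $(\theta-1)^2$ does not vanish), $P$ is a genuine quadratic whose product of roots equals $(S+1)^2/(\theta-1)^2$ by Vieta's formula. Since one root is $1$, the second root is $z^{*}=(S+1)^2/(\theta-1)^2\in\mathbb{Q}_p$. Consequently each $z_i$ equals either $1$ or $z^{*}$, which is precisely the asserted dichotomy: setting $M=\{\,i: z_i=z^{*}\,\}$ (equivalently $M=\{\,i: z_i\neq 1\,\}$ when $z^{*}\neq 1$) yields the stated form, with the convention that $M$ may be taken empty when all components equal $1$.

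Because the whole argument reduces to one quadratic identity, there is no serious obstacle once the index-independence of the coefficients is recognized; the only point requiring care is the degenerate case $\theta=1$. There $(\theta-1)^2=0$ and $P$ collapses to the linear relation $-(S+1)^2\,z_i+(S+1)^2=0$, i.e. $(S+1)^2(1-z_i)=0$. A valid solution then forces $z_i=1$ for every $i$, since $S+1=\theta+S\neq 0$, so the conclusion holds trivially with $M=\varnothing$ (or $z^{*}=1$). Thus in every case the components of a solution take at most the two values $1$ and $z^{*}$, as claimed.
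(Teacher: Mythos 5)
Your proof is correct and follows essentially the same route as the paper: both reduce each equation to a single quadratic in $z_i$ whose coefficients depend only on $\theta$ and the sum $S$, observe that $1$ is always a root, and identify the second root $z^*=(S+1)^2/(\theta-1)^2$ (the paper does this by factoring out $z_i-1$, you by Vieta). Your explicit treatment of the degenerate case $\theta=1$ is a small bonus the paper omits.
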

\begin{proof} It is easy to see that $z_i=1$ is a solution of $i$th equation of the system (\ref{pt1}) for each $i=1,2,\dots,q-1$. Thus for a given $M\subset \{1,\dots,q-1\}$ one can take $z_i=1$ for any $i\notin M$. Let $\emptyset\ne M\subset\{1,\dots,q-1\}$, without loss of generality we can take $M=\{1,2, \dots, m\}$, $m\leq q-1$, i.e. $z_i=1$, $i=m+1,\dots, q$. Now we shall
prove that $z_1=z_2=\dots=z_m$.  From (\ref{pt1}) we have
\begin{equation}\label{r1}
    z_i=\left(\frac{(\theta-1)z_i+\sum_{j=1}^mz_j+q-m}{\sum_{j=1}^mz_j+q-m-1+\theta}\right)^2, \ \ i=1,\dots,m.
\end{equation}
  By assumption $z_i\neq1, i=1,2,...,m$ from (\ref{r1}) we get
$$(\theta-1)^2=\frac{(z_i-1)\left(\sum_{j=1}^mz_j+q-m+1\right)^2}{z_i^2-z_i}=
\frac{\left(\sum_{j=1}^mz_j+q-m+1\right)^2}{z_i}, \ \ i=1,\dots,m.$$
From these equations we get
$$z_i=z_j\qquad\mbox{for any}\quad i,j\in \{1,\dots,m\}.$$
\end{proof}
By this theorem we have that any TIpGMs of the Potts model on the Cayley tree of order two corresponds to a solution $z^*\in\mathcal E_p$ of the following equation
\begin{equation}\label{rm}
z=f_m(z)\equiv \left({(\theta+m-1)z+q-m\over mz+q-m-1+\theta}\right)^2,
\end{equation}
for some $m=1,\dots,q-1$.
\begin{rk} We note that in the real case Theorem \ref{tti} is true for any $k\geq2$ (see \cite[Theorem 2]{KRK}). But for $p$-adic case if $k\geq3$ then
Theorem \ref{tti} is not true, in general. Indeed

1) If $k=q=p=3$ and $\theta=-2$ then $z=(64,-125)$ is a solution to (\ref{pt1}) and $(64,-125)\in\mathcal E_3^2$.

2) If $k=p=3, q=6$ and $\theta=-\frac{37}{20}$ then $z=(64,-125,1,1,1)\in\mathcal E_3^5$ is a solution to (\ref{pt1}).
\end{rk}
\begin{lemma}\label{l1} If $z(m_1)$ is a solution to (\ref{rm}) with $m=m_1$ then $z^{-1}(m_1)$ is a
solution to (\ref{rm}) with $m=q-m_1$.
\end{lemma}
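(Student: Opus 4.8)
The plan is to verify the claim by a single direct substitution, exploiting a reciprocal symmetry hidden in the definition of $f_m$. Write $w=z^{-1}(m_1)=1/z$ and abbreviate the two linear forms occurring in $f_{m_1}$ by $A=(\theta+m_1-1)z+(q-m_1)$ and $B=m_1z+(q-m_1-1+\theta)$, so that the hypothesis $z=f_{m_1}(z)$ reads $z=(A/B)^2$. My goal is to show $w=f_{q-m_1}(w)$.

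First I would substitute $m=q-m_1$ into the defining formula (\ref{rm}) and evaluate it at $w=1/z$. The coefficients reshuffle exactly as needed: $\theta+(q-m_1)-1=\theta+q-m_1-1$, $q-(q-m_1)=m_1$, and $q-(q-m_1)-1+\theta=m_1-1+\theta$. After multiplying the numerator and denominator of the inner fraction by $z$, the numerator becomes $m_1z+(q-m_1-1+\theta)=B$ while the denominator becomes $(\theta+m_1-1)z+(q-m_1)=A$. Hence $f_{q-m_1}(1/z)=(B/A)^2$, which is precisely the reciprocal $1/(A/B)^2=1/f_{m_1}(z)$. Using the hypothesis $f_{m_1}(z)=z$ I then conclude $f_{q-m_1}(1/z)=1/z$, i.e. $w=f_{q-m_1}(w)$, as desired.

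There is no genuine obstacle here beyond the bookkeeping: the entire content is the observation that replacing $m$ by $q-m$ and $z$ by $1/z$ transposes the numerator and denominator of the rational function defining $f_m$, and squaring respects reciprocals. The only point requiring a word of care is that $1/z$ be well defined and that the relevant denominators not vanish. This is automatic for an admissible solution: since $m\le q-1$ we have $q-m\ge 1$, so $f_m(0)=\bigl((q-m)/(q-m-1+\theta)\bigr)^2\neq 0$ and $z=0$ is never a solution, whence $z\neq 0$; moreover $B\neq 0$ already for $f_{m_1}(z)$ to be defined, and $A\neq 0$ follows from $z=(A/B)^2\neq 0$. Thus the substitution and the passage to reciprocals are both legitimate.
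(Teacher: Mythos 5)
Your proof is correct and is exactly the paper's argument: the authors' entire proof is the one-line identity $f_{m}(x)=1/f_{q-m}(x^{-1})$, which is precisely the reciprocal symmetry you verify by direct substitution before applying the fixed-point hypothesis. Your additional remarks on non-vanishing of $z$ and of the denominators are a harmless (and slightly more careful) elaboration of the same route.
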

\begin{proof} It is easy to see that $f_{m}(x)=1/f_{q-m}(x^{-1})$.
\end{proof}

Let $M\subset \{1,\dots, q-1\}$, with $|M|=m$. Then corresponding solution of (\ref{rm})
we denote by $z(M)=\exp_p(h(M))$. It is clear that $h(M)=h(m)$, i.e. it only depends on cardinality of $M$.
Put
$${\mathbf 1}_M=(e_1,\dots,e_q), \ \ \mbox{with} \ \ e_i=1 \ \ \mbox{if} \ \ i\in M, \ \ e_i=0 \ \ \mbox{if} \ \ i\notin M.$$

 We denote by $\mu_{h(M)\mathbf 1_M}$ the TIpGMS corresponding to the solution $h(M)$.
\begin{rk}\label{ns} By formula (\ref{hh}) we have
$${\tilde h}_i(M)=\log_p(\tilde{z}_i(M))=\left\{\begin{array}{ll}
h(M)+{\tilde h}_q(M), \ \ \mbox{if} \ \ i\in M\\[2mm]
{\tilde h}_q(M), \ \ \mbox{if} \ \ i\notin M
\end{array}\right.,$$ i.e.
$${\tilde h}(M){\mathbf 1}_M=h(M){\mathbf 1}_M+{\tilde h}_q(M){\mathbf 1}_{\{1,\dots,q\}}.$$
 Hence for a given $M$, $|M|=m$ and a solution $h(M)$ the number of vectors ${\tilde h}(M){\mathbf 1}_M$ is equal to ${q\choose m}$.
\end{rk}

The following proposition is useful.

\begin{pro}\label{tp} For any finite $\Lambda\subset V$ and any $\sigma_\Lambda\in \{1,\dots,q\}^\Lambda$ we have
\begin{equation}\label{mu}
\mu_{h(M){\mathbf 1}_M}(\sigma_\Lambda)=\mu_{h(M^c){\mathbf 1}_{M^c}}(\sigma_\Lambda),
\end{equation}
where $M^c=\{1,\dots,q\}\setminus M$ and $h(M^c)=-h(M)$.
\end{pro}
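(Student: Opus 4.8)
The plan is to show that the two boundary fields attached to $M$ and to $M^c$ induce identical finite-dimensional distributions (\ref{p*}), after which the equality of the measures on an arbitrary finite $\Lambda$ follows by marginalization. The guiding observation is that the weight in (\ref{p*}) depends on the full field $\tilde h$ only up to adding a constant, i.e.\ the same scalar to every spin component at every vertex: such a shift changes $\sum_{x\in W_n}\tilde h_{\sigma(x),x}$ by a quantity independent of $\sigma_n$, which then cancels against the normalizer $Z_{n,\tilde h}$. Hence it suffices to verify that the two full fields reconstructed from $h(M)$ and $h(M^c)$ via (\ref{hh}) (equivalently, via Remark \ref{ns}) differ by one such constant vector.

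First I would record that $h(M^c)=-h(M)$. Since $|M^c|=q-m$, Lemma \ref{l1} gives that $z(M)^{-1}$ solves (\ref{rm}) with $m$ replaced by $q-m$, so we may take $z(M^c)=z(M)^{-1}$; applying $\log_p$ and using $h=\log_p z$ yields $h(M^c)=\log_p\!\big(z(M)^{-1}\big)=-h(M)$, the correspondence named in the statement. Next, using Remark \ref{ns}, I would compare the two reduced field vectors $h(M)\mathbf 1_M$ and $h(M^c)\mathbf 1_{M^c}=-h(M)\mathbf 1_{M^c}$. Because $M$ and $M^c$ partition $\{1,\dots,q\}$, their difference is $h(M)\mathbf 1_M+h(M)\mathbf 1_{M^c}=h(M)\mathbf 1_{\{1,\dots,q\}}$, a constant vector. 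Absorbing the free additive constants $\tilde h_q(M)$ and $\tilde h_q(M^c)$ from (\ref{hh}), the two full fields $\tilde h(M)\mathbf 1_M$ and $\tilde h(M^c)\mathbf 1_{M^c}$ therefore differ by a single scalar $c\in\mathbb Q_p$ in every component.

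Then I would substitute both fields into (\ref{p*}). The Hamiltonian term $H_n(\sigma_n)$ is identical for both, and the boundary sums differ by $c\,|W_n|$ regardless of $\sigma_n$. Using the standard multiplicativity $\exp_p(a+b)=\exp_p(a)\exp_p(b)$ on $B(0,p^{-1/(p-1)})$ (cf.\ Lemma \ref{el}), the weight of every configuration for the $M$-field equals $\exp_p(c\,|W_n|)$ times its weight for the $M^c$-field; this common factor cancels upon division by $Z_{n,\tilde h}$, so $\mu^{(n)}_{h(M)\mathbf 1_M}=\mu^{(n)}_{h(M^c)\mathbf 1_{M^c}}$ for every $n$. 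Finally, for an arbitrary finite $\Lambda$ I would pick $n$ with $\Lambda\subset V_n$ and sum the (already equal) finite-dimensional weights over all $\sigma_n$ extending $\sigma_\Lambda$, which gives (\ref{mu}).

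The computations are routine; the two points requiring care are the sign bookkeeping that turns the complementation $M\mapsto M^c$ together with $h\mapsto-h$ into a genuinely \emph{constant} shift (this is exactly where $h(M^c)=-h(M)$ is indispensable), and checking that the shift $c$ lies in $B(0,p^{-1/(p-1)})$ so that the factorization of $\exp_p$ and the cancellation are licit. Both are handled by $h(M)=\log_p(z(M))$ with $z(M)\in\mathcal E_p$, which forces $|h(M)|_p<p^{-1/(p-1)}$, together with the strong triangle inequality guaranteeing that the relevant sums remain in $B(0,p^{-1/(p-1)})$.
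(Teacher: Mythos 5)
Your argument is correct and is essentially the intended one: the paper offers no written proof, only the remark that it is ``similar to the proof of Proposition 1 in \cite{KRK}'' (the real-case analogue), where the same observation is used --- that the weight in (\ref{p*}) depends on $\tilde h$ only up to adding a constant to every component, and that $h(M)\mathbf 1_M-h(M^c)\mathbf 1_{M^c}=h(M)\mathbf 1_{\{1,\dots,q\}}$ is exactly such a constant shift. You also correctly supply the one genuinely $p$-adic point the real-case reference does not cover, namely that $h(M)=\log_p(z(M))$ with $z(M)\in\mathcal E_p$ keeps the shift and the relevant sums inside $B(0,p^{-1/(p-1)})$, so that $\exp_p$ factors and the common factor cancels against $Z_{n,\tilde h}$.
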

Proof is similar to the proof of the Proposition 1 in \cite{KRK}.
The following is a corollary of Theorem \ref{tti} and Proposition \ref{tp}.
\begin{cor}\label{cor1} Each TIpGMs corresponds to a solution of (\ref{rm}) with some $m\leq [q/2]$, where $[a]$ is the integer part of $a$.
Moreover, for a given $m\leq [q/2]$,
a fixed solution to (\ref{rm}) generates ${q\choose m}$ vectors ${\tilde h}$ giving ${q\choose m}$ TIpGMs.
\end{cor}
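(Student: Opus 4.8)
The plan is to read off both assertions from the classification in Theorem \ref{tti}, the counting in Remark \ref{ns}, the complementation symmetry of Proposition \ref{tp}, and the inversion Lemma \ref{l1}. First I would fix the bookkeeping: by Theorem \ref{tti} every solution of (\ref{pt1}), hence by the discussion following it every TIpGM, is determined by a choice of elevated set $M\subseteq\{1,\dots,q\}$ of some cardinality $m$ together with a root $z^*$ of the scalar equation (\ref{rm}); the associated measure is $\mu_{h(M){\mathbf 1}_M}$, and by Remark \ref{ns} the choices of $M$ with $|M|=m$ contribute the factor $\binom{q}{m}$. Since the trivial root $z^*=1$ collapses all $M$ to the single symmetric measure, a genuine dependence on $M$ requires $z^*\ne 1$, and that is the case I would track.

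For the first assertion I would apply Proposition \ref{tp}: passing from $M$ to its complement $M^c$ leaves every finite-dimensional marginal unchanged while replacing $h(M)$ by $-h(M)$, i.e. the root $z^*$ by $(z^*)^{-1}$, which by Lemma \ref{l1} is a root of (\ref{rm}) for the cardinality $q-m$. Because $\min\{m,\,q-m\}\le[q/2]$, every TIpGM then admits a representative solution of (\ref{rm}) whose parameter is at most $[q/2]$, which is exactly the first claim.

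For the second assertion I would fix $m\le[q/2]$ and a nontrivial root $z^*\ne 1$. Remark \ref{ns} already supplies $\binom{q}{m}$ field vectors $\tilde h$, one per size-$m$ subset $M$, so the substance is that the resulting measures are pairwise distinct. I would settle this by inspecting the single-vertex marginal coming from (\ref{p*}): the quantity $\mu_{h(M){\mathbf 1}_M}(\sigma(x)=i)$ takes one value for $i\in M$ and a different value for $i\notin M$, the two being unequal precisely because $z^*\ne 1$. Thus the marginal recovers the partition $(M,M^c)$, so distinct size-$m$ sets $M\ne M'$ force distinct measures, giving the claimed $\binom{q}{m}$ TIpGMs.

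The step I expect to need the most care is the interaction of this distinctness argument with Proposition \ref{tp} in the boundary case $q$ even, $m=q/2$, where $|M^c|=m$ as well and the identification $M\leftrightarrow M^c$ now acts inside the fixed-$m$ class. Here I must verify that Proposition \ref{tp} links the fixed root $z^*$ to the \emph{other} root $(z^*)^{-1}$, not to $z^*$ itself, so that no collision arises among the $\binom{q}{m}$ measures attached to the single root $z^*$. Since for $z^*\ne(z^*)^{-1}$ the single-vertex marginals keep the sets $M$ separated, the count is unaffected; the only value to watch is the degenerate $z^*=-1$, for which $z^*=(z^*)^{-1}$, and I would confirm separately that it does not spoil the enumeration for the roots under consideration.
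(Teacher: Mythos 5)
Your proposal is correct and follows essentially the same route as the paper, which derives the corollary directly from Theorem \ref{tti}, Lemma \ref{l1}, Remark \ref{ns} and Proposition \ref{tp} without further argument: the complementation $M\mapsto M^c$, $z^*\mapsto (z^*)^{-1}$ reduces to $m\le[q/2]$, and Remark \ref{ns} supplies the $\binom{q}{m}$ vectors $\tilde h$. Your additional verification that the $\binom{q}{m}$ measures are pairwise distinct (via the single-vertex marginal recovering $M$ when $z^*\ne1$, with attention to the $m=q/2$ case) is a point the paper leaves implicit, but it is a refinement of the same argument rather than a different approach.
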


Now we try to solve the equation \ref{rm} in $\mathcal E_p\setminus\{1\}$. From \ref{rm} we get
$$z-1=\frac{(z-1)(\theta-1)((\theta+2m-1)z+2q-2m+\theta-1)}{(mz+q-m+\theta-1)^2}.$$
Dividing this equation to $z-1$ we obtain
\begin{equation}\label{kv}
m^2z^2+(2m(q-m)-(\theta-1)^2)z+(q-m)^2=0.
\end{equation}
This equation has solutions
\begin{equation}\label{sol}
z_{1,2}(m)=\frac{(\theta-1)^2-2m(q-m)\pm(\theta-1)\sqrt{(\theta-1)^2-4m(q-m)}}{2m^2},
\end{equation}
if there exists $\sqrt{(\theta-1)^2-4m(q-m)}$ in $\mathbb Q_p$. If the equation (\ref{kv}) has solutions $z_{1,2}(m)$ in $\mathbb Q_p$ then we have
\begin{equation}\label{z_1-1z_2-1}
|(z_1(m)-1)(z_2(m)-1)|_p=\frac{\left|q^2-(\theta-1)^2\right|_p}{\left|m^2\right|_p}
\end{equation}
Denote by $D=(\theta-1)^2-4m(q-m)$.
We must check the existence of $\sqrt{D}$ in $\mathbb Q_p$ and $z_{1,2}(m)\in\mathcal E_p\setminus\{1\}$ which equivalent to the following conditions:
\begin{equation}\label{E_p\1}
0<\left|\frac{(\theta-1)^2-2mq\pm(\theta-1)\sqrt{D}}{2m^2}\right|_p<1,\qquad\mbox{if}\ \ p>2
\end{equation}
and
\begin{equation}\label{E_2\1}
0<\left|\frac{(\theta-1)^2-2mq\pm(\theta-1)\sqrt{D}}{2m^2}\right|_2<\frac{1}{2},\qquad\mbox{if}\ \ p=2.
\end{equation}
\begin{rk}\label{z_1=1}
Let $\theta\in\{1-q,1+q\}$ then we have $z_1(m)=1$ and $z_2(m)=\left(\frac{q-m}{m}\right)^2.$ In this case we have only one solution $z_2(m)\neq1$ if $q\neq2m$ and $\left|q^2-2mq\right|_p<\left|2m^2\right|_p$.
\end{rk}

\subsection{\bf Case $p\neq2$}
The following lemma is useful
\begin{lemma}\label{l12}
If $p\neq2$ and $|a|_p=|b|_p$ then $|a+b|_p=|a|_p$ or $|a-b|_p=|a|_p$.
\end{lemma}
\begin{proof}
It is clear that $|a+b|_p=|a-b|_p=|a|_p$ if $a=b=0$. Let $a\neq0$. Since $\left|\frac{a}{|a|_p}\right|_p=1$ for convenience we consider the case $|a|_p=|b|_p=1$. Let us consider the canonical form of $a$ and $b$, i.e.m
$$a=a_0+a_1p+a_2p^2+\cdots,\qquad b=b_0+b_1p+b_2p^2+\cdots,$$
where $a_0, b_0\in\{1,...,p-1\}$.
It is sufficient to show that $a_0+b_0$ or $a_0-b_0$ is not dividable by $p$. Assume $a_0+b_0=p$ then
$a_0-b_0=a_0+b_0-2b_0$. This is not dividable by $p$. Because, $2b_0$ is not dividable by $p$.
\end{proof}
\begin{pro}\label{pro11} Let $p\neq2$. If $q\notin p\mathbb N$ then for any integer number $m\in\{1,...,q-1\}$ the equation (\ref{kv}) has no solution in $\mathcal E_p\setminus\{1\}$.
\end{pro}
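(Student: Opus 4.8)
The plan is to argue by contradiction, extracting the obstruction from the Vieta relations for the quadratic (\ref{kv}) rather than from the explicit radical formula (\ref{sol}). First I would record the two facts that drive everything. Since $\theta=\exp_p(J)$ with $J\in B(0,p^{-1/(p-1)})$, Lemma \ref{el} gives $|\theta-1|_p=|J|_p<1$, hence $|(\theta-1)^2|_p<1$; and the hypothesis $q\notin p\mathbb N$ means $|q|_p=1$, while $p\neq2$ gives $|2|_p=1$.

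Next I would denote the two roots of (\ref{kv}) by $z_1,z_2$ (both lying in $\mathbb Q_p$ as soon as one of them does, since their sum is a rational expression in $\theta,q,m$), set $w_i=z_i-1$, and compute the product and sum via Vieta. The product is already displayed as (\ref{z_1-1z_2-1}): because the two norms differ, $|q^2-(\theta-1)^2|_p=\max\{|q^2|_p,|(\theta-1)^2|_p\}=1$, so $|w_1w_2|_p=|m|_p^{-2}\ge1$. The sum is $w_1+w_2=((\theta-1)^2-2mq)/m^2$; here $|2mq|_p=|m|_p\le1$ and $|(\theta-1)^2|_p<1$, so the numerator has norm at most $1$, whence $|w_1+w_2|_p\le|m|_p^{-2}=|w_1w_2|_p$.

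Now suppose, for contradiction, that some root lies in $\mathcal E_p\setminus\{1\}$, which for $p\neq2$ is exactly condition (\ref{E_p\1}); relabelling, say $0<|w_2|_p<1$. Then $|w_1|_p=|w_1w_2|_p/|w_2|_p>|w_1w_2|_p\ge1>|w_2|_p$, so the two shifted roots have distinct norms and the strong triangle inequality forces $|w_1+w_2|_p=\max\{|w_1|_p,|w_2|_p\}=|w_1|_p$. But the previous paragraph gives $|w_1+w_2|_p\le|w_1w_2|_p<|w_1|_p$, a contradiction. Hence no root of (\ref{kv}) can lie in $\mathcal E_p\setminus\{1\}$, for any $m\in\{1,\dots,q-1\}$.

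The only delicate point, and the step I expect to be the real crux, is the sum estimate: one must observe that $|(\theta-1)^2-2mq|_p\le1$ holds uniformly in $m$, regardless of whether $p\mid m$, because both $|(\theta-1)^2|_p<1$ and $|2mq|_p=|m|_p\le1$. This uniformity is precisely what lets a single product–sum comparison close all cases at once and avoids splitting on the $p$-adic valuation of $m$ (equivalently, on the existence and size of $\sqrt{D}$ in (\ref{sol})); Lemma \ref{l12} would be the tool one needs if instead pushing the analysis through the explicit $\pm$ formula for $z_{1,2}(m)$.
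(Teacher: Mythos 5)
Your proposal is correct, and it proves the proposition by a genuinely different and considerably shorter route than the paper. The paper's proof works directly with the explicit radical formula (\ref{sol}): it splits into the three cases $|m|_p>|(\theta-1)^2|_p$, $|m|_p<|(\theta-1)^2|_p$, $|m|_p=|(\theta-1)^2|_p$ (the last with three further subcases according to which of $|(\theta-1)^2-2mq|_p$, $|(\theta-1)^2-4mq|_p$ drops below $|(\theta-1)^2|_p$), must in each case decide whether $\sqrt{D}$ exists via Theorem \ref{tx2}, and invokes Lemma \ref{l12} in the final subcase to control the sign ambiguity in $\pm(\theta-1)\sqrt{D}$. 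Your argument replaces all of this with the two Vieta identities $|(z_1-1)(z_2-1)|_p=|q^2-(\theta-1)^2|_p/|m|_p^2=1/|m|_p^2\geq 1$ (which is exactly (\ref{z_1-1z_2-1}), using $|q|_p=1$ and $|\theta-1|_p<1$) and $|(z_1-1)+(z_2-1)|_p=|(\theta-1)^2-2mq|_p/|m|_p^2\leq 1/|m|_p^2$, after which the assumption $0<|z_2-1|_p<1$ forces $|z_1-1|_p>|(z_1-1)(z_2-1)|_p\geq 1>|z_2-1|_p$ and hence $|(z_1-1)+(z_2-1)|_p=|z_1-1|_p>|(z_1-1)(z_2-1)|_p$, contradicting the sum bound. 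The key observations that make this work are all verified in your write-up: if one root lies in $\mathbb Q_p$ so does the other (so Vieta applies), the product is nonzero (so no root equals $1$ and the roots have distinct norms under the contradiction hypothesis), and the numerator bound $|(\theta-1)^2-2mq|_p\leq 1$ holds uniformly in $m$ because $|2mq|_p=|m|_p\leq 1$. What the paper's approach buys is that the same case machinery is reused for Propositions \ref{pro12}--\ref{pro13} and \ref{pro21}--\ref{pro23}, where solutions do exist and one must decide exactly how many, so the existence of $\sqrt{D}$ cannot be sidestepped there; what your approach buys is a uniform, case-free proof of the nonexistence statement that never needs to know whether $D$ is a square.
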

\begin{proof}
{\it Case} $|m|_p>\left|(\theta-1)^2\right|_p$. We show that if the solutions (\ref{sol}) exist in $\mathbb Q_p$ then they do not belong to $\mathcal E_p\setminus\{1\}$. Recall that the existence of solutions in $\mathbb Q_p$ is equivalent to the existence of $\sqrt{D}$. Assume that $\sqrt{D}\in\mathbb Q_p$. Then we get
$$\left|(\theta-1)^2D\right|_p=\left|(\theta-1)^2\left((\theta-1)^2-4mq+4m^2\right)\right|_p\leq
\left|(\theta-1)^2m\right|_p<\left|m^2\right|_p.$$
Hence $\left|(\theta-1)\sqrt{D}\right|_p<|m|_p$. Using non-Archimedean norm's property we get
$$|z_{1,2}(m)-1|_p=\frac{\left|(\theta-1)^2-2mq\pm(\theta-1)\sqrt{D}\right|_p}{\left|2m^2\right|_p}=\frac{|m|_p}{\left|m^2\right|_p}=\frac{1}{|m|_p}\geq1.$$
Thus we have shown that the condition (\ref{E_p\1}) is not satisfied.
This means that the solutions do not belong to $\mathcal E_p\setminus\{1\}$.\\
{\it Case} $|m|_p<\left|(\theta-1)^2\right|_p$. Then there exists integer number $s\geq1$ such that
$|m|_p=\left|p^s(\theta-1)^2\right|_p$. We have
$$D=(\theta-1)^2\left(1+\varepsilon p^s\right),\qquad\mbox{where}\ |\varepsilon|_p=1.$$
By Theorem \ref{tx2} there exists $\sqrt{D}$ and $\sqrt{D}=(\theta-1)(1+\varepsilon' p^s)$. Consequently, the solutions (\ref{sol}) exist in $\mathbb Q_p$. Now we shall show that $z_{1,2}(m)\notin\mathcal E_p\setminus\{1\}$. We have from (\ref{z_1-1z_2-1})
$$|z_1(m)-1|_p=\frac{\left|(\theta-1)^2-2mq+(\theta-1)^2\left(1+\varepsilon' p^s\right)\right|_p}{\left|2m^2\right|_p}=\frac{\left|(\theta-1)^2\right|_p}{\left|m^2\right|_p}>\frac{1}{|m|_p}>1.$$
From this and by (\ref{z_1-1z_2-1}) we get
$$|z_2(m)-1|_p=\frac{1}{\left|m^2\right|_p}\cdot\frac{\left|m^2\right|_p}{\left|(\theta-1)^2\right|_p}=\frac{1}{\left|(\theta-1)^2\right|_p}>1.$$
This means that $z_{1,2}(m)\notin\mathcal E_p\setminus\{1\}$.\\
{\it Case} $|m|_p=\left|(\theta-1)^2\right|_p$. If $\left|(\theta-1)^2-4mq\right|_p<\left|(\theta-1)^2\right|_p$ then from non-Archimedean norm's property we get
$$\left|(\theta-1)^2-2mq\right|_p=\left|(\theta-1)^2-4mq+2mq\right|_p=|m|_p=\left|(\theta-1)^2\right|_p.$$
Consequently
$$|z_{1,2}(m)-1|_p=\frac{\left|(\theta-1)^2-2mq\pm(\theta-1)\sqrt{D}\right|_p}{\left|2m^2\right|_p}=\frac{\left|(\theta-1)^2\right|_p}{\left|m^2\right|_p}=\frac{1}{|m|_p}>1.$$
Now let $\left|(\theta-1)^2-2mq\right|_p<\left|(\theta-1)^2\right|_p$. Then from non-Archimedean norm's property we get
$$\left|(\theta-1)^2-4mq\right|_p=\left|(\theta-1)^2-2mq-2mq\right|_p=\left|(\theta-1)^2\right|_p.$$
Hence
$$|z_{1,2}(m)-1|_p=\frac{\left|(\theta-1)^2-2mq\pm(\theta-1)\sqrt{D}\right|_p}{\left|2m^2\right|_p}=\frac{\left|(\theta-1)^2\right|_p}{\left|m^2\right|_p}=\frac{1}{|m|_p}>1.$$ Finally we consider the case
$$\left|(\theta-1)^2-2mq\right|_p=\left|(\theta-1)^2-4mq\right|_p=\left|(\theta-1)^2\right|_p.$$
If $\sqrt{D}$ exists in $\mathbb Q_p$ then we have $|\sqrt{D}|_p=|\theta-1|_p$. There exist $p$-adic numbers $\varepsilon$ and $\epsilon$ such that
$$(\theta-1)^2-2mq=(\theta-1)^2\varepsilon,\qquad (\theta-1)\sqrt{D}=(\theta-1)^2\epsilon\qquad\mbox{and}\ |\varepsilon|_p=|\epsilon|_p=1.$$
By Lemma \ref{l12} we get $|\varepsilon+\epsilon|_p=1$ or $|\varepsilon-\epsilon|_p=1$ as $p\neq2$. Assume that $|\varepsilon+\epsilon|_p=1$ (The case $|\varepsilon-\epsilon|_p=1$ is similar). Then for the solution $z_{1}(m)$ we get
$$|z_{1}(m)-1|_p=\frac{\left|(\theta-1)^2\left(\varepsilon+\epsilon\right)\right|_p}{\left|m^2\right|_p}=\frac{\left|(\theta-1)^2\right|_p}{\left|m^2\right|_p}=\frac{1}{|m|_p}>1.$$
From this and by (\ref{z_1-1z_2-1}) we get
$$|z_2(m)-1|_p=\frac{1}{\left|m^2\right|_p}\cdot|m|_p=\frac{1}{|m|_p}>1.$$
This means that the solutions do not belong to $\mathcal E_p\setminus\{1\}$.
\end{proof}
\begin{pro}\label{pro12}
Let $p\neq2,\ q\in p\mathbb N$ and $\theta\in\{1-q,1+q\}$. Then the following statements hold\\
1) If $|m|_p>|q|_p$ then the equation (\ref{kv}) has only one solution $z_2(m)$ in $\mathcal E_p\setminus\{1\}$.\\
2) If $|m|_p<|q|_p$ then the equation (\ref{kv}) has no solution in $\mathcal E_p\setminus\{1\}$.\\
3) If $|m|_p=|q|_p$ and $|q-2m|_p\in\left\{0,|q|_p\right\}$ then the equation (\ref{kv}) has no solution in $\mathcal E_p\setminus\{1\}$.\\
4) If $|m|_p=|q|_p$ and $0<|q-2m|_p<|q|_p$ then the equation (\ref{kv}) has only one solution $z_2(m)$ in $\mathcal E_p\setminus\{1\}$.
\end{pro}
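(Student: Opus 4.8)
The plan is to exploit the special structure that $\theta\in\{1-q,1+q\}$ forces on (\ref{kv}). Since $\theta-1=\pm q$ we have $(\theta-1)^2=q^2$, so the discriminant is $D=(\theta-1)^2-4m(q-m)=q^2-4m(q-m)=(q-2m)^2$, a perfect square in $\mathbb Q_p$. Thus $\sqrt D$ always exists and, as already recorded in Remark \ref{z_1=1}, the two roots of (\ref{kv}) are
$$z_1(m)=1,\qquad z_2(m)=\left(\frac{q-m}{m}\right)^2.$$
Because $z_1(m)=1\notin\mathcal E_p\setminus\{1\}$, the entire question collapses to deciding, in each of the four regimes, whether $z_2(m)$ lies in $\mathcal E_p\setminus\{1\}$. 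For $p\neq2$ this membership is exactly the condition $0<|z_2(m)-1|_p<1$ (the case $p>2$ of (\ref{sol})'s admissibility requirement).

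Next I would isolate the single quantity that governs every case. Writing
$$z_2(m)-1=\left(\frac{q-m}{m}\right)^2-1=\frac{(q-m)^2-m^2}{m^2}=\frac{q(q-2m)}{m^2},$$
I obtain
$$|z_2(m)-1|_p=\frac{|q|_p\,|q-2m|_p}{|m|_p^{\,2}}.$$
It then remains only to evaluate $|q-2m|_p$ and compare the result with $1$. The arithmetic input here is that $p\neq2$ gives $|2m|_p=|m|_p$, so property (1) listed after the definition of the $p$-adic norm pins down $|q-2m|_p$ whenever $|m|_p\neq|q|_p$: if $|m|_p>|q|_p$ then $|q-2m|_p=|m|_p$, while if $|m|_p<|q|_p$ then $|q-2m|_p=|q|_p$.

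With this I would dispatch the four statements. If $|m|_p>|q|_p$ then $|z_2(m)-1|_p=|q|_p/|m|_p<1$, hence $z_2(m)\in\mathcal E_p\setminus\{1\}$, giving (1). If $|m|_p<|q|_p$ then $|z_2(m)-1|_p=|q|_p^2/|m|_p^2>1$, so there is no admissible solution, giving (2). When $|m|_p=|q|_p$ the norm $|q-2m|_p$ is no longer determined, so I split according to its value: if $|q-2m|_p=|q|_p$ then $|z_2(m)-1|_p=1$, and if $|q-2m|_p=0$ then $q=2m$ and $z_2(m)=1$ collapses onto the trivial root; in both subcases $z_2(m)\notin\mathcal E_p\setminus\{1\}$, which is (3). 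Finally, if $0<|q-2m|_p<|q|_p$ then $|z_2(m)-1|_p=|q-2m|_p/|q|_p\in(0,1)$, so $z_2(m)\in\mathcal E_p\setminus\{1\}$, which is (4).

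The computations are short, so there is no serious obstacle; the only point demanding care is the boundary regime $|m|_p=|q|_p$. There the strong triangle inequality leaves $|q-2m|_p$ undetermined, and one must cleanly separate the two degenerate possibilities $|q-2m|_p=0$ (where $z_2$ merges with the trivial root $z_1=1$) and $|q-2m|_p=|q|_p$ (where $z_2$ lands exactly on the boundary $|z_2-1|_p=1$ of $\mathcal E_p$, and so is excluded) from the genuinely admissible window $0<|q-2m|_p<|q|_p$.
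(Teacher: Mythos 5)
Your proof is correct and follows essentially the same route as the paper: both reduce, via Remark \ref{z_1=1}, to testing whether $z_2(m)=\left(\frac{q-m}{m}\right)^2$ satisfies $0<|z_2(m)-1|_p=\frac{|q|_p|q-2m|_p}{|m|_p^2}<1$ (the paper writes this as $0<|q^2-2mq|_p<|m^2|_p$), and then run the same case analysis on $|m|_p$ versus $|q|_p$ using the strong triangle inequality and $|2m|_p=|m|_p$ for $p\neq 2$. No substantive difference.
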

\begin{proof}
By Remark \ref{z_1=1} we have $z_1(m)=1\notin\mathcal E_p\setminus\{1\}$ and $z_2(m)=\left(\frac{q-m}{m}\right)^2$. It easy to see that $z_2(m)\in\mathcal E_p\setminus\{1\}$ is equivalent to the condition
\begin{equation}\label{theta=1+q}
0<\left|q^2-2mq\right|_p<\left|m^2\right|_p
\end{equation}
So, we must check condition (\ref{theta=1+q}).\\
Let $|q|_p\neq|m|_p$. Then by non-Archimedean norm's property we get
$$\left|q^2-2mq\right|_p<\left|m^2\right|_p,\qquad\mbox{if}\quad |m|_p>|q|_p$$
and
$$\left|q^2-2mq\right|_p>\left|m^2\right|_p,\qquad\mbox{if}\quad |m|_p<|q|_p.$$
Let $|q|_p=|m|_p$. It is easy to see condition (\ref{theta=1+q}) is not satisfied if $q=2m$. If $|q-2m|_p=|q|_p$ then we have
$$\left|q^2-2mq\right|_p=|q(q-2m)|_p=\left|q^2\right|_p=\left|m^2\right|_p.$$
If $0<|q-2m|_p<|q|_p$ then we have
$$\left|q^2-2mq\right|_p=|q(q-2m)|_p<\left|q^2\right|_p=\left|m^2\right|_p.$$
\end{proof}
\begin{pro}\label{pro13}
Let $p\neq2,\ q\in p\mathbb N$ and $\theta\notin\{1-q,1+q\}$.\\
1) If $|m|_p>\max\{|\theta-1|_p,|q|_p\}$  then there exist two solutions in $\mathcal E_p\setminus\{1\}$\\
2) If $|\theta-1|_p>\max\{|m|_p,|q|_p\}$ then the equation (\ref{kv}) has no solutions in $\mathcal E_p\setminus\{1\}$.\\
3) If $|q|_p>\max\{|m|_p,|\theta-1|_p\}$ then the equation (\ref{kv}) has no solutions in $\mathcal E_p\setminus\{1\}$.\\
4) If $|q|_p<|m|_p=|\theta-1|_p$ then the equation (\ref{kv}) has no solutions in $\mathcal E_p\setminus\{1\}$.\\
5) If $|\theta-1|_p<|q|_p=|m|_p$ then the equation (\ref{kv}) has no solutions in $\mathcal E_p\setminus\{1\}$.\\
6) If $|m|_p<|\theta-1|_p=|q|_p$ and $\left|(\theta-1)^2-q^2\right|_p<\left|q^2\right|_p$ then the equation (\ref{kv}) has only one solution $z_2(m)$ in $\mathcal E_p\setminus\{1\}$.\\
7) If $|m|_p<|\theta-1|_p=|q|_p$ and $\left|(\theta-1)^2-q^2\right|_p=\left|q^2\right|_p$ then the equation (\ref{kv}) has no solution in $\mathcal E_p\setminus\{1\}$.\\
8) Let $|m|_p=|\theta-1|_p=|q|_p$. If $\left|(\theta-1)^2-q^2\right|_p=\left|q^2\right|_p$ then the equation (\ref{kv}) has no solution in $\mathcal E_p\setminus\{1\}$.\\
9) Let $|m|_p=|\theta-1|_p=|q|_p$. If $|\theta-1+q|_p<|q|_p$ $(|\theta-1-q|_p<|q|_p)$ and $|q-2m|_p=|q|_p$ then the equation (\ref{kv}) has only one solution $z_1(m)$ (resp. $z_2(m)$) in $\mathcal E_p\setminus\{1\}$.\\
10) Let $|m|_p=|\theta-1|_p=|q|_p$ and $\left|(\theta-1)^2-q^2\right|_p<\left|q^2\right|_p,\ |q-2m|_p<|q|_p$. Then the equation has two solutions in $\mathcal E_p\setminus\{1\}$ iff $\sqrt{D}$ exists in $\mathbb Q_p$.
\end{pro}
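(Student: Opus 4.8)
The plan is to turn the membership question into one norm inequality and then dispatch the ten cases by the relative sizes of $|m|_p$, $|\theta-1|_p$, $|q|_p$. From (\ref{sol}) a one-line computation gives $z_{1,2}(m)-1=N_\pm/(2m^2)$ with $N_\pm=(\theta-1)^2-2mq\pm(\theta-1)\sqrt{D}$, so, since $|2|_p=1$, the root $z_j(m)$ lies in $\mathcal E_p\setminus\{1\}$ exactly when $0<|N_\pm|_p<|m|_p^2$ (and of course $\sqrt{D}$ must exist in $\mathbb Q_p$ for any root to exist at all). Two identities carry the argument. First, (\ref{z_1-1z_2-1}) rewrites as $|N_+|_p\,|N_-|_p=|m|_p^2\,|q^2-(\theta-1)^2|_p$. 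Second, $D=(\theta-1)^2-q^2+(q-2m)^2$, which lets me read off $|D|_p$, hence $|(\theta-1)\sqrt D|_p$, and decide existence of $\sqrt D$ through Theorem \ref{tx2}. Because $\theta\notin\{1-q,1+q\}$ forces $q^2\neq(\theta-1)^2$, the product above is nonzero, so no root equals $1$ and the condition ``$0<$'' is automatic; only the inequality ``$|N_\pm|_p<|m|_p^2$'' has to be checked.

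Writing $a=(\theta-1)^2-2mq$ and $b=(\theta-1)\sqrt D$, one has $N_\pm=a\pm b$, the upper bound $|N_\pm|_p\le\mu:=\max\{|a|_p,|b|_p\}$, and, combining with the product identity, the lower bound $\min\{|N_+|_p,|N_-|_p\}\ge|m|_p^2|q^2-(\theta-1)^2|_p/\mu$. In every ``no solution'' regime (parts 2--5, 7, 8) a short evaluation of $|a|_p$ and $|b|_p$ shows the right-hand side is $\ge|m|_p^2$, so both $|z_j(m)-1|_p\ge1$ and nothing survives; in part 1 one checks instead that $\mu<|m|_p^2$, so both roots are admissible (and $\sqrt D$ exists since there $D$ is a perfect square times a $1+(\text{small})$ unit). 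The two ``exactly one'' statements, parts 6 and 9, are the place where Lemma \ref{l12} is indispensable: there $|a|_p=|b|_p$ is the \emph{large} common value, Lemma \ref{l12} pins $\mu$ at that value so that one of $N_+,N_-$ is forbidden, and the product identity then drives the other strictly below $|m|_p^2$, leaving a single admissible root; tracking which sign cancels labels the survivor $z_1(m)$ or $z_2(m)$ according to whether $\theta-1\approx-q$ or $\theta-1\approx q$.

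Part 10 is the crux and the only genuinely conditional assertion. Under its hypotheses $|m|_p=|\theta-1|_p=|q|_p=:c$, the two decompositions $(\theta-1)^2-2mq=[(\theta-1)^2-q^2]+q(q-2m)$ and $D=[(\theta-1)^2-q^2]+(q-2m)^2$ together with $|(\theta-1)^2-q^2|_p<c^2$ and $|q-2m|_p<c$ give $|a|_p<c^2$ and $|D|_p<c^2$, hence $|b|_p=c\,|D|_p^{1/2}<c^2$. Thus $\mu<c^2=|m|_p^2$ and \emph{both} $|N_\pm|_p<|m|_p^2$, so the moment the roots exist they both lie in $\mathcal E_p\setminus\{1\}$; conversely any root of (\ref{kv}) in $\mathbb Q_p$ needs $\sqrt D\in\mathbb Q_p$, which is exactly the claimed equivalence. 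The main obstacle is hidden in this last step: unlike parts 1, 2, 6, 9, where $D$ splits off a perfect-square factor and Theorem \ref{tx2} yields $\sqrt D$ automatically, in part 10 the smallness of $|D|_p$ comes from cancellation between $(\theta-1)^2-q^2$ and $(q-2m)^2$, so the leading digit of $D$ is not controlled and $\sqrt D$ genuinely may or may not exist. I would finally record that $D=0$ (i.e. $(\theta-1)^2-q^2=-(q-2m)^2$) collapses the two roots to one, so the literal ``two solutions'' reading tacitly assumes $D\neq0$.
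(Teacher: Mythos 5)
Your proposal is correct and follows essentially the same route as the paper's proof: reduce membership in $\mathcal E_p\setminus\{1\}$ to the norm condition $0<\left|(\theta-1)^2-2mq\pm(\theta-1)\sqrt{D}\right|_p<|m|_p^2$, combine the product identity (\ref{z_1-1z_2-1}) with the decomposition $D=(\theta-1)^2-q^2+(q-2m)^2$, use Theorem \ref{tx2} for existence of $\sqrt{D}$ and Lemma \ref{l12} for the $p\neq 2$ dichotomy in the ``exactly one solution'' cases. Your packaging via $\mu=\max\{|a|_p,|b|_p\}$ and the derived lower bound streamlines the ten case checks (and your remark that $D=0$ would collapse the two roots in part 10 is a fair point the paper leaves implicit), but the underlying computations coincide with the paper's.
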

\begin{proof} Note that if there exist $z_{1,2}(m)$ in $\mathbb Q_p$ then from $\theta\neq1\pm q$ we have $z_{1,2}(m)\neq1$. So, instead of (\ref{E_p\1}) we must check the following
\begin{equation}\label{E_p}
|z_{1,2}(m)-1|_p=\left|\frac{(\theta-1)^2-2mq\pm(\theta-1)\sqrt{D}}{2m^2}\right|_p<1
\end{equation}
1) Let $|m|_p>\max\{|\theta-1|_p,|q|_p\}$. In this case we have
$$D=(\theta-1)^2-4mq+4m^2=4m^2\left(1-\frac{q}{m}+\left(\frac{\theta-1}{2m}\right)^2\right)=4m^2(1+\varepsilon p),\qquad |\varepsilon|_p\leq1.$$
By Theorem \ref{tx2} there exists $\sqrt{D}$ in $\mathbb Q_p$ and $\sqrt{D}=2m(1+\varepsilon'p)$ where $|\varepsilon'|_p\leq1$. Consequently, the equation (\ref{kv}) has two solutions $z_1(m)$ and $z_2(m)$ in $\mathbb Q_p$. We shall check (\ref{E_p}). From non-Archimedean norm's property we get
$$|z_{1,2}-1|_p=\left|\frac{(\theta-1)^2-2mq\pm2m(\theta-1)(1+\varepsilon' p)}{2m^2}\right|_p\leq\frac{\max\{|\theta-1|_p,|q|_p\}}{|m|_p}<1.$$
Hence, $z_{1,2}\in\mathcal E_p\setminus\{1\}$.

2) Let $|\theta-1|_p>\max\{|m|_p,|q|_p\}$. In this case for the discriminant we get
$$D=(\theta-1)^2\left(1+\frac{4m(m-q)}{(\theta-1)^2}\right)=(\theta-1)^2(1+\varepsilon p)\qquad\mbox{where}\ |\varepsilon|_p\leq1.$$
By Theorem \ref{tx2} there exists $\sqrt{D}$ and $\sqrt{D}=(\theta-1)(1+\varepsilon' p)$ where $|\varepsilon|_p\leq1$. Consequently,
the equation (\ref{kv}) has two solutions in $\mathbb Q_p$. For the solution $z_1(m)$ from (\ref{E_p}) we get
$$|z_1(m)-1|_p=\frac{\left|2(\theta-1)^2-2mq+\varepsilon' p(\theta-1)^2\right|_p}{\left|m^2\right|_p}=\frac{\left|(\theta-1)^2\right|_p}{\left|m^2\right|_p}>1.$$
From this and by (\ref{z_1-1z_2-1}) we have
$$|z_2(m)-1|_p=\frac{\left|(\theta-1)^2\right|_p}{\left|m^2\right|_p}\cdot\frac{\left|m^2\right|_p}{\left|(\theta-1)^2\right|_p}=1.$$
This means that the solutions $z_{1,2}(m)$ do not belong to the set $\mathcal E_p$.

3) Let $|\theta-1|_p\leq|m|_p<|q|_p$. In this case the equation (\ref{kv}) is solvable in $\mathbb Q_p$ if and only if $\sqrt{-mq}$ exists in $\mathbb Q_p$. Assume that $\sqrt{-mq}\in\mathbb Q_p$. Since $\left|(\theta-1)^2\right|_p<|mq|_p$ and
$$\left|(\theta-1)^2D\right|_p=\left|-4mq(\theta-1)^2(1+\varepsilon p)\right|_p<\left|m^2q^2\right|_p\qquad\mbox{where}\ |\varepsilon|_p\leq1$$
by non-Archimedean norm's property we have
$$|z_{1,2}(m)-1|_p=\left|\frac{(\theta-1)^2-2mq\pm(\theta-1)\sqrt{D}}{2m^2}\right|_p=
\frac{\left|mq\right|_p}{\left|m^2\right|_p}=\frac{|q|_p}{|m|_p}>1.$$
It means that in this case the equation (\ref{kv}) has no solution in $\mathcal E_p$.

Let $|m|_p<|\theta-1|_p<|q|_p$. If $\left|(\theta-1)^2\right|_p>|mq|_p$ then by Theorem \ref{tx2} there exists $\sqrt{D}$ in $\mathbb Q_p$ and $\sqrt{D}=(\theta-1)(1+\varepsilon p)$. For the solution $z_1(m)$ from (\ref{E_p}) we get
$$|z_1(m)-1|_p=\frac{\left|2(\theta-1)^2-2mq+\varepsilon p(\theta-1)^2\right|_p}{\left|m^2\right|_p}=\frac{\left|(\theta-1)^2\right|_p}{\left|m^2\right|_p}>1.$$
By substituting this to (\ref{z_1-1z_2-1}) we have
$$|z_2(m)-1|_p=\frac{\left|q^2\right|_p}{\left|m^2\right|_p}\cdot\frac{\left|m^2\right|_p}{\left|(\theta-1)^2\right|_p}=
\frac{\left|q^2\right|_p}{\left|(\theta-1)^2\right|_p}>1.$$

If $\left|(\theta-1)^2\right|_p\leq|mq|_p$ then we have
$$\left|(\theta-1)^2D\right|_p=\left|(\theta-1)^2\left((\theta-1)^2-4mq+4m^2\right)\right|_p\leq\left|m^2q^2\right|_p.$$
Considering this by non-Archimedean norm's property from (\ref{E_p}) we get
$$|z_{1,2}-1|_p\leq\frac{\left|mq\right|_p}{\left|m^2\right|_p}=\frac{|q|_p}{|m|_p}.$$
On the other hand by (\ref{z_1-1z_2-1}) we have $|(z_1(m)-1)(z_2(m)-1)|_p=\frac{\left|q^2\right|_p}{\left|m^2\right|_p}>1$.\\
Consequently,
$$|z_{1,2}-1|_p=\frac{|q|_p}{|m|_p}>1.$$

4) Let $|q|_p<|m|_p=|\theta-1|_p$. Now we shall prove that if there exists $\sqrt{\frac{(\theta-1)^2}{m^2}+4-\frac{4q}{m}}$ in $\mathbb Q_p$ then holds
\begin{equation}\label{e}
\left|\frac{\theta-1}{m}\pm\sqrt{\frac{(\theta-1)^2}{m^2}+4-\frac{4q}{m}}\right|_p=1
\end{equation}
Assume that $\sqrt{\frac{(\theta-1)^2}{m^2}+4-\frac{4q}{m}}$ exists. Then by $\frac{|q|_p}{|m|_p}<1$ we get
\begin{equation}\label{e1}
\left|\left(\frac{\theta-1}{m}+\sqrt{\frac{(\theta-1)^2}{m^2}+4-\frac{4q}{m^2}}\right)
\left(\frac{\theta-1}{m}-\sqrt{\frac{(\theta-1)^2}{m^2}+4-\frac{4q}{m}}\right)\right|_p=1.
\end{equation}
Considering $\frac{|\theta-1|_p}{|m|_p}=1$ we have
\begin{equation}\label{e2}
\left|\frac{\theta-1}{m}\pm\sqrt{\frac{(\theta-1)^2}{m^2}+4-\frac{4q}{m}}\right|_p\leq1
\end{equation}
From (\ref{e1}),(\ref{e2}) it follows (\ref{e}).\\
Since $D=m^2\left(\frac{(\theta-1)^2}{m^2}+4-\frac{4q}{m}\right)$ there exists $\sqrt{D}$ if and only if
$\sqrt{\frac{(\theta-1)^2}{m^2}+4-\frac{4q}{m}}$ exists.
If $\sqrt{\frac{(\theta-1)^2}{m^2}+4-\frac{4q}{m}}$ exists then from $|q|_p<|m|_p=|\theta-1|_p$ and by (\ref{e}) we have
$$|z_{1,2}(m)-1|_p=\left|\frac{\theta-1}{2m}\left(\frac{\theta-1}{m}\pm\sqrt{\frac{(\theta-1)^2}{m^2}+4-\frac{4q}{m^2}}\right)-\frac{q}{m}\right|_p=1.$$ This means that $z_{1,2}(m)\notin\mathcal E_p$.

5) Let $|\theta-1|_p<|m|_p=|q|_p$. In this case if a discriminant $\sqrt{D}$ exists then we have following inequality
$$\left|(\theta-1)^2D\right|_p=\left|(\theta-1)^2\left((\theta-1)^2-4m(q-m)\right)\right|_p\leq\left|(\theta-1)^2m^2\right|_p<\left|m^4\right|_p.$$
Hence,
$$|z_{1,2}(m)-1|_p=\frac{\left|(\theta-1)^2-2mq\pm(\theta-1)\sqrt{D}\right|_p}{\left|2m^2\right|_p}=\frac{\left|m^2\right|_p}{\left|m^2\right|_p}=1.$$
It means that $z_{1,2}(m)\notin\mathcal E_p$.

6) Let $|m|_p<|\theta-1|_p=|q|_p$ and $0<\left|(\theta-1)^2-q^2\right|_p<\left|q^2\right|_p$. In this case $\sqrt{D}$ exists and $\sqrt{D}=(\theta-1)(1+\varepsilon p)$ where $|\varepsilon|_p\leq1$. Consequently, the equation (\ref{kv}) has solutions $z_{1,2}(m)$ in $\mathbb Q_p$. From (\ref{E_p}) and(\ref{z_1-1z_2-1}) we have
$$|z_1(m)-1|_p=\frac{\left|(\theta-1)^2\right|_p}{|m^2|}>1.$$
It means that $z_{1}(m)\notin\mathcal E_o\setminus\{1\}$.\\
By (\ref{z_1-1z_2-1}) we get
$$|(z_1(m)-1)(z_2(m)-1)|_p=\frac{\left|(\theta-1)^2-q^2\right|}{|m^2|_p}<\frac{\left|(\theta-1)^2\right|_p}{\left|m^2\right|_p}.$$
From these we get $|z_2(m)-1|_p<1$. It means $z_2(m)\in\mathcal E_p\setminus\{1\}.$

7) Let $|m|_p<|\theta-1|_p=|q|_p$ and $\left|(\theta-1)^2-q^2\right|_p=\left|q^2\right|_p$. In this case the equation (\ref{kv}) has two solutions in $\mathbb Q_p$. We show that they do not belong to $\mathcal E_p\setminus\{1\}$. By (\ref{E_p}) we have
$$|z_1(m)-1|_p=\frac{\left|(\theta-1)^2\right|_p}{\left|m^2\right|_p}>1$$
Hence, by (\ref{z_1-1z_2-1}) we have
$$|(z_1(m)-1)(z_2(m)-1)|_p=\frac{\left|(\theta-1)^2-q^2\right|}{|m^2|_p}=\frac{\left|(\theta-1)^2\right|_p}{\left|m^2\right|_p}.$$
From these we get $|z_2(m)-1|_p=1$. Thus we have shown that $z_{1,2}(m)\notin\mathcal E_p\setminus\{1\}$.

8) Let $|m|_p=|\theta-1|_p=|q|_p$ and $\left|(\theta-1)^2-q^2\right|_p=\left|q^2\right|_p$. If there exist solutions to the equation (\ref{kv}) then (\ref{E_p}) we have
$$|z_{1,2}(m)-1|_p=\frac{\left|(\theta-1)^2-2mq\pm(\theta-1)\sqrt{D}\right|_p}{\left|2m^2\right|_p}\leq\frac{\left|q^2\right|_p}{\left|m^2\right|_p}=1.$$
But from (\ref{z_1-1z_2-1}) we get
$$|(z_1(m)-1)(z_2(m)-1)|_p=\frac{\left|(\theta-1)-q^2\right|_p}{\left|m^2\right|_p}=\frac{\left|q^2\right|_p}{\left|m^2\right|_p}=1.$$
Consequently,
$$|z_{1,2}(m)-1|_p=1.$$

9) Let $|m|_p=|\theta-1|_p=|q|_p$ and $|\theta-1+q|_p<|q|_p,\ |q-2m|_p=|q|_p$. Then by Lemma \ref{l12} we have $|\theta-1-q|_p=|q|_p$ and $p>2$. In this case we get
$$D=(\theta-1)^2-q^2+(q-2m)^2=(q-2m)^2(1+\varepsilon p).$$
By Theorem \ref{tx2} there exists $\sqrt{D}$ in $\mathbb Q_p$ and $\sqrt{D}=(q-2m)(1+\varepsilon' p)$.\\
From (\ref{E_p}) we get
$$|z_1(m)-1|_p=\frac{\left|(\theta-1)^2-q^2-(q-2m)(\theta-1+q)+\varepsilon' p(q-2m)(\theta-1)\right|_p}{\left|2m^2\right|_p}<\frac{\left|q^2\right|_p}{\left|m^2\right|_p}=1$$
and
$$|z_2(m)-1|_p=\frac{\left|(\theta-1)^2-q^2-(q-2m)(\theta-1-q)-\varepsilon' p(q-2m)(\theta-1)\right|_p}{\left|2m^2\right|_p}=\frac{\left|q^2\right|_p}{\left|m^2\right|_p}=1.$$
Thus we have shown that $z_1(m)\in\mathcal E_p\setminus\{1\}$ and $z_2(m)\notin\mathcal E_p\setminus\{1\}$.

10) Let $|m|_p=|\theta-1|_p=|q|_p$. If there exists $\sqrt{D}$ in $\mathbb Q_p$ then from $\left|(\theta-1)^2-q^2\right|_p<\left|q^2\right|_p$ and $|q-2m|_p<|q|_p$ we get
$$|D|_p=\left|(\theta-1)^2-q^2+(q-2m)^2\right|_p<\left|q^2\right|_p.$$
Hence,
$$|z_{1,2}(m)-1|_p=\frac{\left|(\theta-1)^2-q^2+q(q-2m)\pm(\theta-1)\sqrt{D}\right|_p}{\left|m^2\right|_p}<\frac{\left|q^2\right|_p}{\left|m^2\right|_p}=1.$$
Thus we have shown that in this case the equation (\ref{kv}) has two solutions in $\mathcal E_p\setminus\{1\}$.
\end{proof}

\begin{cor}\label{cor11}
Let $p\neq2$ and $q\in p\mathbb N$.\\
a) If $|m|_p=1$ and $\theta\notin\{1-q,1+q\}$ then the equation (\ref{kv}) has two solutions $z_1(m)$ and $z_2(m)$ in $\mathcal E_p\setminus\{1\}$.\\
b) If $|m|_p=1$ and $\theta\in\{1-q,1+q\}$ then the equation (\ref{kv}) has only one solution $z_2(m)$ in $\mathcal E_p\setminus\{1\}$.
\end{cor}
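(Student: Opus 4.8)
The plan is to read this corollary as the specialization of Propositions \ref{pro12} and \ref{pro13} to the regime $|m|_p=1$, so that essentially no new work is required beyond checking that the norm of $m$ dominates those of $q$ and of $\theta-1$.

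First I would record the basic norm inequalities forced by the hypotheses. Since $q\in p\mathbb N$ we have $|q|_p\leq 1/p<1$. Moreover $\theta=\exp_p(J)$ with $J\in B(0,p^{-1/(p-1)})$, so Lemma \ref{el} gives $|\theta-1|_p=|\exp_p(J)-1|_p=|J|_p<p^{-1/(p-1)}$; as $p\neq2$ this upper bound is strictly below $1$, whence $|\theta-1|_p<1$. Combining these with the standing assumption $|m|_p=1$ yields
\[
|m|_p=1>\max\{|\theta-1|_p,\,|q|_p\}.
\]

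With this inequality in hand both parts are immediate. For part a), where $\theta\notin\{1-q,1+q\}$, the displayed inequality is precisely the hypothesis of case~1) of Proposition \ref{pro13}, which produces two distinct solutions $z_1(m),z_2(m)\in\mathcal E_p\setminus\{1\}$. For part b), where $\theta\in\{1-q,1+q\}$, the weaker inequality $|m|_p=1>|q|_p$ is exactly the hypothesis of case~1) of Proposition \ref{pro12}; that case gives the single solution $z_2(m)\in\mathcal E_p\setminus\{1\}$, the root $z_1(m)$ being equal to $1$ by Remark \ref{z_1=1} and hence excluded. Since both statements reduce to already-established cases, the only step demanding any care is the verification $|\theta-1|_p<1$, which rests on the admissible range of the coupling constant $J$ together with Lemma \ref{el}.
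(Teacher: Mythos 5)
Your proposal is correct and matches the paper's (implicit) argument: the corollary is stated without proof precisely because, once one notes that $q\in p\mathbb N$ forces $|q|_p<1$ and that $\theta=\exp_p(J)$ with $J\in B(0,p^{-1/(p-1)})$ forces $|\theta-1|_p<1$, the hypothesis $|m|_p=1$ places you squarely in case 1) of Proposition \ref{pro13} (for part a) and case 1) of Proposition \ref{pro12} (for part b). Your explicit verification of $|m|_p>\max\{|\theta-1|_p,|q|_p\}$ via Lemma \ref{el} is exactly the step the authors leave to the reader.
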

\begin{cor}\label{cor12}
Let $q=p>2$.\\
1) If $\theta\notin\{1-q,1+q\}$ then for any integer number $m$ such that $m<q$ the equation (\ref{kv}) has two solutions $z_1(m)$ and $z_2(m)$ in $\mathcal E_p\setminus\{1\}$;\\
2) If $\theta\in\{1-q,1+q\}$ then for any integer number $m$ such that $m<q$ the equation (\ref{kv}) has only one solution $z_2(m)$ in $\mathcal E_p\setminus\{1\}$.
\end{cor}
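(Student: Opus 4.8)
The plan is to recognize that Corollary \ref{cor12} is nothing more than the specialization of Corollary \ref{cor11} to the case $q=p$, so the whole argument reduces to checking that the hypotheses of Corollary \ref{cor11} are met and then reading off its two conclusions. First I would observe that the standing assumptions of Corollary \ref{cor11}, namely $p\neq 2$ and $q\in p\mathbb N$, are automatic here: the hypothesis $q=p>2$ gives $p\neq 2$ directly, and $q=p=p\cdot 1$ shows $q\in p\mathbb N$. Thus the only quantity I must control before invoking Corollary \ref{cor11} is the $p$-adic norm $|m|_p$ of the index $m$.

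The key step is therefore to translate the constraint $m<q$ into the norm condition $|m|_p=1$ appearing in Corollary \ref{cor11}. Since $m$ ranges over the integers $1,\dots,q-1=p-1$ and $p$ is prime, every such $m$ satisfies $p\nmid m$. Writing $m$ in its canonical form \eqref{ek}, the leading exponent is $\gamma(m)=0$, so $|m|_p=p^{-\gamma(m)}=1$. Hence for each admissible $m$ the equation \eqref{kv} falls exactly under the branch $|m|_p=1$ treated in Corollary \ref{cor11}.

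With this verified, the conclusions follow by a straightforward case split on $\theta$. When $\theta\notin\{1-q,1+q\}$, part a) of Corollary \ref{cor11} applies and yields two solutions $z_1(m)$ and $z_2(m)$ in $\mathcal E_p\setminus\{1\}$, which is statement 1). When $\theta\in\{1-q,1+q\}$, part b) applies and yields exactly one solution $z_2(m)$ in $\mathcal E_p\setminus\{1\}$ (the other root $z_1(m)=1$ being excluded, as already noted in Remark \ref{z_1=1}), which is statement 2).

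I do not expect any genuine obstacle here, since the mathematical content has already been carried out in Propositions \ref{pro11}--\ref{pro13} and aggregated into Corollary \ref{cor11}; the present corollary is pure specialization. The only point requiring care is the elementary observation that $1\le m<p$ forces $|m|_p=1$ by primality of $p$, which is what aligns the constraint $m<q$ with the hypothesis $|m|_p=1$ of Corollary \ref{cor11}.
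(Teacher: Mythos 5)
Your proposal is correct and matches the paper's (implicit) argument: Corollary \ref{cor12} is stated without separate proof precisely because it is the specialization of Corollary \ref{cor11} (itself drawn from Propositions \ref{pro12} and \ref{pro13}) to $q=p$, where $1\le m<p$ forces $|m|_p=1>\max\{|\theta-1|_p,|q|_p\}$. Your only added content --- the primality observation giving $|m|_p=1$ --- is exactly the step the paper leaves to the reader.
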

By Corollary \ref{cor1} and by Propositions \ref{pro11}-\ref{pro13} we get the following
\begin{thm}\label{tigm1} Let $p\neq2$.
1) For a given $m\leq[q/2]$ there exist 2${q\choose m}$ of TIpGMs if at least one of the following conditions is satisfied\\
1a) $|m|_p>\max\{|\theta-1|_p,|q|_p\}$ and $\theta\notin\{1-q,1+q\}$\\
1b) $|m|_p=|\theta-1|_p=|q|_p,\ 0<\left|(\theta-1)^2-q^2\right|_p<\left|q^2\right|_p,\ 0<|q-2m|_p<|q|_p$ and there exists an integer number $s\geq1$ such that $\left|p^{-2s}\left((\theta-1)^2-4m(q-m)\right)\right|_p=1$.\\
2) For a given $m\leq[q/2]$ there exist ${q\choose m}$ of TIpGMs if at least one of the following conditions is satisfied\\
2a) $|m|_p>\max\{|\theta-1|_p,|q|_p\}$ and $\theta\in\{1-q,1+q\}$\\
2b) $|m|_p<|\theta-1|_p=|q|_p$ and $0<\left|(\theta-1)^2-q^2\right|_p<\left|q^2\right|_p$\\
2c) $|m|_p=|\theta-1|_p=|q|_p$ and $0<\left|(\theta-1)^2-q^2\right|_p<\left|q^2\right|_p,\ |q-2m|_p=|q|_p$\\
2d) $|m|_p=|\theta-1|_p=|q|_p$ and $\theta\in\{1-q,1+q\}$ and $0<|q-2m|_p<|q|_p$\\
2e) $q=2m$, $|\theta-1|_p=|q|_p$, $0<\left|(\theta-1)^2-q^2\right|_p<\left|q^2\right|_p$ and there exists an integer number $s\geq1$ such that $\left|p^{-2s}\left((\theta-1)^2-q^2\right)\right|_p=1$.\\
Otherwise for a given $m\in \{1,\dots, [q/2]\}$ there does not exist any TIpGM.
\end{thm}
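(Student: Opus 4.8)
The plan is to assemble the statement from Corollary \ref{cor1} together with the case-by-case root counts in Propositions \ref{pro11}--\ref{pro13}. By Corollary \ref{cor1}, every TIpGM attached to a support cardinality $m\le[q/2]$ arises from a root $z^*\in\mathcal E_p\setminus\{1\}$ of the quadratic (\ref{kv}), and a fixed such root produces exactly ${q\choose m}$ distinct measures. Hence the number of TIpGMs for a given $m$ equals ${q\choose m}$ times the number of roots of (\ref{kv}) in $\mathcal E_p\setminus\{1\}$, and the whole theorem reduces to reading off, in each regime of the valuations $|m|_p,|\theta-1|_p,|q|_p$, how many admissible roots exist.

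First I would dispose of $q\notin p\mathbb N$: by Proposition \ref{pro11} there are no roots in $\mathcal E_p\setminus\{1\}$ for any $m$, so these parameters fall under the final ``otherwise'' clause. For $q\in p\mathbb N$ I would split according to whether $\theta\in\{1-q,1+q\}$ or not, applying Proposition \ref{pro12} in the former case and Proposition \ref{pro13} in the latter. Each item of those propositions asserts ``two roots'', ``one root'', or ``no root''; I would check that the conjunction of valuation conditions in each item is equivalent to one of the labelled hypotheses 1a, 1b, 2a--2e, and that all the remaining items (Proposition \ref{pro12}(2),(3) and Proposition \ref{pro13}(2)--(5),(7),(8), plus Proposition \ref{pro13}(10) when $\sqrt D$ fails to exist) are precisely the no-root regimes forming the ``otherwise'' clause. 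The translations are routine: under $|\theta-1|_p=|q|_p$ the inequality $0<|(\theta-1)^2-q^2|_p<|q^2|_p$ is equivalent to $|\theta-1+q|_p<|q|_p$ or $|\theta-1-q|_p<|q|_p$, via Lemma \ref{l12} and the factorization $(\theta-1)^2-q^2=(\theta-1+q)(\theta-1-q)$, matching Proposition \ref{pro13}(9); and the clause ``$\exists\,s\ge1$ with $|p^{-2s}((\theta-1)^2-4m(q-m))|_p=1$'' in 1b and 2e is just the explicit form, through Theorem \ref{tx2}, of the existence of $\sqrt D$ demanded in Proposition \ref{pro13}(10).

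The one place where care is required, and the step I expect to be the main obstacle, is the degenerate case $q=2m$, which produces condition 2e. Here $m=q-m=[q/2]$, so the product of the two roots of (\ref{kv}) equals $(q-m)^2/m^2=1$, i.e.\ the roots are $z^*$ and $(z^*)^{-1}$. By Lemma \ref{l1} this reciprocal pair is associated with the same value of $m$, and by Proposition \ref{tp} the measure built from $(z^*,M)$ coincides with the one built from $((z^*)^{-1},M^c)=(z(M^c),M^c)$. Since $|M^c|=q-m=m$, the index $M^c$ again runs over all $m$-subsets, so the $2{q\choose m}$ measures naively attached to the two roots collapse in coincident pairs to exactly ${q\choose m}$ distinct TIpGMs. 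This collapse is why $q=2m$ lands in Part 2 rather than Part 1, even though (\ref{kv}) genuinely has two roots in $\mathcal E_p\setminus\{1\}$; in every Part 1 regime the strict inequality $0<|q-2m|_p$ (or $|m|_p>|q|_p$) forbids $q=2m$, so no such identification occurs and the count is a clean $2{q\choose m}$.
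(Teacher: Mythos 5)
Your proposal is correct and follows essentially the same route as the paper, whose entire proof of Theorem \ref{tigm1} is the one-line citation of Corollary \ref{cor1} together with Propositions \ref{pro11}--\ref{pro13}; your translation of each valuation regime to the labelled conditions matches theirs. Your explicit treatment of the $q=2m$ collapse via Lemma \ref{l1} and Proposition \ref{tp} (reciprocal roots paired with complementary sets $M$, $M^c$ of equal cardinality) is exactly the implicit reason condition 2e yields ${q\choose m}$ rather than $2{q\choose m}$ measures, a point the paper leaves unstated.
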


\subsection{\bf Case $p=2$}
\begin{pro}\label{pro21} Let $p=2$. If $|q|_2>\frac{1}{4}$ then for any integer number $m<q$ the equation (\ref{kv}) has no solution in $\mathcal E_2\setminus\{1\}$.
\end{pro}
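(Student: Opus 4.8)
The plan is to remove the trivial root $z=1$ at the very start by passing to the shifted variable $\epsilon=z-1$, and then to run an elementary ultrametric estimate showing that every root of the resulting quadratic is too large in $2$-adic norm to lie in $\mathcal E_2\setminus\{1\}$.

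First I would record the two standing inequalities that drive the argument. Since $J\in B(0,2^{-1})$ we have $|J|_2\le\tfrac14$, and Lemma \ref{el} gives $|\theta-1|_2=|\exp_2(J)-1|_2=|J|_2\le\tfrac14$, whence $|(\theta-1)^2|_2\le\tfrac1{16}$. The hypothesis $|q|_2>\tfrac14$ means $|q|_2\in\{\tfrac12,1\}$, so $|q^2|_2\ge\tfrac14$. Recall also that $z\in\mathcal E_2\setminus\{1\}$ is equivalent to $0<|z-1|_2\le\tfrac14$.

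Substituting $z=1+\epsilon$ into (\ref{kv}) and using $m^2+2m(q-m)+(q-m)^2=q^2$, I would rewrite the equation as
$$m^2\epsilon^2+B\epsilon+C=0,\qquad B=2mq-(\theta-1)^2,\quad C=q^2-(\theta-1)^2,$$
so that $z$ solves (\ref{kv}) iff $\epsilon=z-1$ solves this quadratic. Because $|q^2|_2\ge\tfrac14>\tfrac1{16}\ge|(\theta-1)^2|_2$, the strong triangle inequality yields $|C|_2=|q^2|_2=|q|_2^2\ge\tfrac14$; in particular $C\neq0$, so $\epsilon=0$ is never a root and it suffices to exclude roots with $0<|\epsilon|_2\le\tfrac14$. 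Now suppose for contradiction that a root satisfies $|\epsilon|_2\le\tfrac14$. Rewriting the quadratic as $C=-\epsilon(m^2\epsilon+B)$ and taking norms, the identity $|C|_2=|\epsilon|_2\,|m^2\epsilon+B|_2$ together with $|C|_2\ge\tfrac14$ and $|\epsilon|_2\le\tfrac14$ forces $|m^2\epsilon+B|_2\ge1$. On the other hand $|m^2\epsilon|_2=|m|_2^2|\epsilon|_2\le\tfrac14$ (as $|m|_2\le1$ for the positive integer $m$) and $|B|_2\le\max\{|2mq|_2,|(\theta-1)^2|_2\}\le\max\{\tfrac12,\tfrac1{16}\}=\tfrac12$, so by the ultrametric inequality $|m^2\epsilon+B|_2\le\tfrac12<1$, a contradiction. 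Hence (\ref{kv}) has no root in $\mathcal E_2\setminus\{1\}$.

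I expect no case distinction on the parity of $q$ nor on the existence of $\sqrt D$ to be needed: if $\sqrt D\notin\mathbb Q_2$ there are no roots at all, and otherwise the estimate above excludes both. The one delicate point is the uniform lower bound $|C|_2=|q|_2^2\ge\tfrac14$, and this is precisely where the hypothesis $|q|_2>\tfrac14$ enters, guaranteeing that $|q^2|_2$ strictly dominates $|(\theta-1)^2|_2$; once this domination is secured the remaining ultrametric bookkeeping is routine.
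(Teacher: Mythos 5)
Your proof is correct, and it takes a genuinely different and noticeably cleaner route than the paper's. The paper works directly with the explicit root formula (\ref{sol}): it splits into the cases $|q|_2=1$ and $|q|_2=\tfrac12$, then into further subcases according to the relative sizes of $|(\theta-1)^2|_2$ and $|2m|_2$ (resp.\ $|m|_2$, $|8m|_2$), invokes Theorem \ref{tx2} to decide when $\sqrt D$ exists and what its leading unit is, and uses the product identity (\ref{z_1-1z_2-1}) to control the second root once the first is estimated. You instead shift to $\epsilon=z-1$, note via $m^2+2m(q-m)+(q-m)^2=q^2$ that the shifted quadratic $m^2\epsilon^2+(2mq-(\theta-1)^2)\epsilon+(q^2-(\theta-1)^2)=0$ has constant term of norm $|q|_2^2\ge\tfrac14$ (this is exactly where the hypothesis $|q|_2>\tfrac14$ and the bound $|\theta-1|_2=|J|_2\le\tfrac14$ from Lemma \ref{el} enter), while $|2mq-(\theta-1)^2|_2\le\tfrac12$ and $|m^2\epsilon|_2\le\tfrac14$ for any candidate root with $|\epsilon|_2\le\tfrac14$; the factorization $C=-\epsilon(m^2\epsilon+B)$ then gives the contradiction $1\le|m^2\epsilon+B|_2\le\tfrac12$. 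This is in effect a Newton-polygon argument showing every root of (\ref{kv}) in $\mathbb Q_2$ has $|z-1|_2>\tfrac14$, and it buys uniformity: no case split on $|q|_2$, no discussion of whether $\sqrt D$ exists (a root of the original equation is a root of the shifted one regardless), and each root is excluded individually rather than by playing the two roots off against each other. Your identification of $\mathcal E_2\setminus\{1\}$ with $0<|z-1|_2\le\tfrac14$ matches the paper's condition, and your shifted constant term is the same quantity underlying the paper's formula (\ref{z_1-1z_2-1}), so the two arguments agree where they overlap.
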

\begin{proof}
{\it Case} $|q|_2=1$. Let $\left|(\theta-1)^2\right|_2<|2m|_2$. Then we have
$$\left|(\theta-1)^2D\right|_2=\left|(\theta-1)^2\left((\theta-1)^2-4mq+4m^2\right)\right|_2\leq\left|(\theta-1)^24m\right|_2<\left|4m^2\right|_2.$$
From this and by non-Archimedean norm's property we get
$$|z_{1,2}(m)-1|_2=\frac{\left|(\theta-1)^2-2mq\pm(\theta-1)\sqrt{D}\right|_2}{\left|2m^2\right|_2}=\frac{\left|2m\right|_2}{\left|2m^2\right|_2}=\frac{1}{|m|_2}\geq1.$$
Let $\left|(\theta-1)^2\right|_2>|2m|_2$. It is easy to see that by Theorem \ref{tx2} there does not exist $\sqrt{D}$ in $\mathbb Q_2$ if $\left|(\theta-1)^2\right|_2=|m|_2$.\\
Assume that $\left|(\theta-1)^2\right|_2>|m|_2$. Then by Theorem \ref{tx2} there exists $\sqrt{D}$ and $\sqrt{D}=(\theta-1)(1+2\varepsilon)$ where $|\varepsilon|_2\leq1$.\\
If $|\varepsilon|_2<1$ then for the solution $z_1(m)$ we have
$$|z_1(m)-1|_2=\frac{\left|2(\theta-1)^2-2mq+2(\theta-1)^2\varepsilon\right|_2}{\left|2m^2\right|_2}=\frac{\left|(\theta-1)^2\right|_2}{\left|m^2\right|_2}>1$$
From this and by (\ref{z_1-1z_2-1}) we get
$$|z_2(m)-1|_2=\frac{1}{\left|m^2\right|_2}\cdot\frac{\left|m^2\right|_2}{\left|(\theta-1)^2\right|_2}>1.$$
If $|\varepsilon|_2=1$ then for the solution $z_2(m)$ we have
$$|z_2(m)-1|_2=\frac{\left|-2mq-2(\theta-1)^2\varepsilon\right|_2}{\left|2m^2\right|_2}=\frac{\left|(\theta-1)^2\right|_2}{\left|m^2\right|_2}>1$$
From this and by (\ref{z_1-1z_2-1}) we get
$$|z_1(m)-1|_2=\frac{1}{\left|m^2\right|_2}\cdot\frac{\left|m^2\right|_2}{\left|(\theta-1)^2\right|_2}>1.$$
Let $\left|(\theta-1)^2\right|_2=|2m|_2$. In this case we have
$$\left|(\theta-1)^2D\right|_2=\left|(\theta-1)^22m\right|_2=\left|4m^2\right|_2.$$
Note that if $|a|_2=|b|_2=|c|_2$ then follows $|a\pm b\pm c|_2=|a|_2$.
From this property we get
$$|z_{1,2}(m)-1|_2=\frac{\left|(\theta-1)^2-2mq\pm(\theta-1)\sqrt{D}\right|_2}{\left|m^2\right|_2}=\frac{\left|2m\right|_2}{\left|m^2\right|_2}=\frac{1}{|m|_2}>1.$$
Thus we have shown that the equation (\ref{kv}) has no solution in $\mathcal E_2\setminus\{1\}$ if $|q|_2=1$.\\
{\it Case} $|q|_2=\frac{1}{2}$. If $|m|_2=1$ then for the discriminant we have $D=4m^2(1+2\varepsilon)$ where $|\varepsilon|_2=1$. By Theorem \ref{tx2} there does not exist $\sqrt{D}$ in $\mathbb Q_2$.\\
If $|m|_2=|q|_2$ then we get
$$\left|(\theta-1)^2D\right|_2\leq\left|16(\theta-1)^2\right|_2.$$
Considering $\left|\theta-1\right|_2\leq\frac{1}{4}$ and $|2mq|_2=\frac{1}{8}$ we have
$$|z_{1,2}(m)-1|_2=\frac{|2mq|_2}{\left|2m^2\right|_2}=1.$$
Let $|m|_2<|q|_2$. If $\left|(\theta-1)^2\right|_2\leq|8m|_2$ then we have
$$\left|(\theta-1)^2D\right|_2\leq\left|8m(\theta-1)^2\right|_2\leq\left|64m^2\right|_2.$$
Hence,
$$|z_{1,2}(m)-1|_2=\frac{|2mq|_2}{\left|2m^2\right|_2}=\frac{|q|_2}{|m|_2}>1.$$
If $\left|(\theta-1)^2\right|_2>|8m|_2$ then by Theorem \ref{tx2} there exists $\sqrt{D}$ if and only if
$\left|(\theta-1)^2\right|_2\geq|m|_2$. Assume that $\left|(\theta-1)^2\right|_2\geq|m|_2$. Then we have
$$|z_1(m)-1|_2=\frac{\left|(\theta-1)^2-2mq+(\theta-1)^2(1+2\varepsilon)\right|_2}{\left|2m^2\right|_2}=
\frac{\left|(\theta-1)^2\right|_2}{\left|m^2\right|_2}\geq1.$$
Considering $|q|_2=\frac{1}{2}$ and $|\theta-1|_2<\frac{1}{2}$ from (\ref{z_1-1z_2-1}) we get
$$|z_2(m)-1|_2=\frac{\left|q^2\right|_2}{\left|m^2\right|_2}\cdot\frac{\left|m^2\right|_2}{\left|(\theta-1)^2\right|_2}>1.$$
Thus we have proved that the equation (\ref{kv}) has no solution in $\mathcal E_2\setminus\{1\}$ if $|q|_2=\frac{1}{2}$.
\end{proof}
\begin{pro}\label{pro22} Let $p=2$ and $\theta\in\{1-q,\ 1+q\}$. Then the equation (\ref{kv}) has only solution $z_2(m)$ in $\mathcal E_2\setminus\{1\}$ if $|m|_2>|q|_2$, otherwise it has no solution in $\mathcal E_2\setminus\{1\}$.
\end{pro}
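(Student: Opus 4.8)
The plan is to collapse the statement to a single $2$-adic norm estimate. Since $\theta\in\{1-q,1+q\}$ gives $(\theta-1)^2=q^2$, the discriminant becomes $D=(\theta-1)^2-4m(q-m)=(q-2m)^2$, a perfect square; hence $\sqrt D$ exists in $\mathbb Q_2$ for free and both roots of (\ref{kv}) lie in $\mathbb Q_2$ with no further solvability hypothesis. By Remark \ref{z_1=1} these roots are $z_1(m)=1$ and $z_2(m)=\big(\tfrac{q-m}{m}\big)^2$. As $z_1(m)=1\notin\mathcal E_2\setminus\{1\}$ by definition, the whole proposition reduces to deciding when $z_2(m)\in\mathcal E_2\setminus\{1\}$.

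Next I would turn membership into an inequality. By Lemma \ref{el}, equivalently by (\ref{E_2\1}), for $p=2$ one has $z\in\mathcal E_2\setminus\{1\}$ iff $0<|z-1|_2<\tfrac12$. Writing
\[ z_2(m)-1=\Big(\frac{q-m}{m}\Big)^2-1=\frac{q(q-2m)}{m^2}, \]
I get $|z_2(m)-1|_2=|q|_2\,|q-2m|_2/|m|_2^{\,2}$, so everything rests on comparing this quantity with $\tfrac12$ and with $0$. The vanishing case $|z_2(m)-1|_2=0$ is exactly $q=2m$, where (\ref{kv}) degenerates to $m^2(z-1)^2=0$ and yields no element of $\mathcal E_2\setminus\{1\}$.

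The heart of the argument is a case analysis of $|q-2m|_2$ via the strong triangle inequality, with the decisive $p=2$ feature $|2m|_2=\tfrac12|m|_2$. If $|m|_2\le|q|_2$ then $|q|_2\ge|m|_2>|2m|_2$, so $|q-2m|_2=|q|_2$ and $|z_2(m)-1|_2=|q|_2^2/|m|_2^2\ge1$; thus $z_2(m)\notin\mathcal E_2\setminus\{1\}$, which settles the ``otherwise'' alternative. If $|m|_2>|q|_2$, I compare $|q|_2$ with $|2m|_2$: when $|q|_2<|2m|_2$ the ultrametric estimate is sharp, $|q-2m|_2=|2m|_2$, giving $|z_2(m)-1|_2=|q|_2/(2|m|_2)<\tfrac14$; when $|q|_2=|2m|_2$ it yields only $|q-2m|_2\le|2m|_2$ and hence $|z_2(m)-1|_2\le\tfrac14$. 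In both subcases $|z_2(m)-1|_2<\tfrac12$, and it is nonzero precisely when $q\neq2m$, so $z_2(m)\in\mathcal E_2\setminus\{1\}$.

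The delicate point, and exactly where $p=2$ departs from the analogue Proposition \ref{pro12}, is the subcase $|q|_2=|2m|_2$ sitting inside $|m|_2>|q|_2$: there the ultrametric inequality is not sharp, so $|q-2m|_2$ cannot be read off exactly, and this is moreover the only configuration in which the degeneracy $q=2m$ can occur (for $p\neq2$ the relation $q=2m$ forces $|q|_2=|m|_2$ instead and lands in a different branch). Thus the main work is to confirm that the mere bound $|q-2m|_2\le|2m|_2$ still pushes the quotient strictly below $\tfrac12$, and to keep track of $q=2m$ as the single case excluded by the nonvanishing requirement $|z_2(m)-1|_2>0$.
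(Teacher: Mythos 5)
Your proof is correct and follows essentially the same route as the paper: identify $z_1(m)=1$ and $z_2(m)=\left(\frac{q-m}{m}\right)^2$ via Remark \ref{z_1=1}, reduce everything to the single norm $|z_2(m)-1|_2=|q|_2\,|q-2m|_2/|m|_2^2$, and split on the comparison of $|m|_2$ with $|q|_2$. If anything you are more careful than the paper, which records the degenerate subcase $q=2m$ (where $z_2(m)=1$ even though $|m|_2>|q|_2$) in its displayed case list but then omits it from the concluding ``if and only if'' sentence.
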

\begin{proof}
Let $\theta=1\pm q$. Then by Remark \ref{theta=1+q} we get $z_1(m)=1$ and $z_2(m)=\left(\frac{q-m}{m}\right)^2$. It is clear that $z_1(m)\notin\mathcal E_2\setminus\{1\}$.
\[|z_2(m)-1|_2=\frac{\left|q^2-2mq\right|_2}{\left|m^2\right|_2}=\left\{\begin{array}{ll}
>\frac{1}{2}&\mbox{ if } |m|_2\leq|q|_2\\
0&\mbox{ if } q=2m\\
<\frac{1}{2}&\mbox{ if } |m|_2>|q|_2
\end{array}\right.\]
Hence, the equation (\ref{kv}) has solution in $\mathcal E_2\setminus\{1\}$ if and only if $|q|_2<|m|_2$.
\end{proof}
\begin{pro}\label{pro23} Let $p=2$ and $\theta\notin\{1-q,1+q\}$. Then the following statements hold\\
1) If $|4m|_2>\max\{|\theta-1|_2,|q|_2\}$ then the equation (\ref{kv}) has two solutions $z_1(m)$ and $z_2(m)$ in $\mathcal E_2\setminus\{1\}$\\
2) If $|\theta-1|_2>\max\{|q|_2,|4m|_2\}$ then the equation (\ref{kv}) has no solution in $\mathcal E_2\setminus\{1\}$.\\
3) If $|q|_2>\max\{|\theta-1|_2,|4m|_2\}$ then the equation (\ref{kv}) has no solution in $\mathcal E_2\setminus\{1\}$.\\
4) If $|4m|_2=|\theta-1|_2>|q|_2$ then then the equation (\ref{kv}) has no solution in $\mathcal E_2\setminus\{1\}$.\\
5) If $|4m|_2=|q|_2>|\theta-1|_2$ then then the equation (\ref{kv}) has no solution in $\mathcal E_2\setminus\{1\}$.\\
6) If $|4m|_2=|\theta-1|_2=|q|_2$ then the equation (\ref{kv}) has two solutions $z_1(m)$ and $z_2(m)$ in $\mathcal E_2\setminus\{1\}$\\
7) If $|m|_2=|\theta-1|_2=|q|_2>|4m|_2$ then the equation (\ref{kv}) has only one solution in $\mathcal E_2\setminus\{1\}$\\
8) Let $|m|_2>|\theta-1|_2=|q|_2>|4m|_2$. If there exists $\sqrt{D}$ then the equation (\ref{kv}) has two solutions $z_1(m)$ and $z_2(m)$ in $\mathcal E_2\setminus\{1\}$.\\
9) If $|\theta-1|_2=|q|_2>|m|_2$ then the equation (\ref{kv}) has only one solution in $\mathcal E_2\setminus\{1\}$.
\end{pro}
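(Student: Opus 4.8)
The plan is to run, for each of the nine magnitude regimes, the same three-step analysis used for odd $p$ in Proposition \ref{pro13}, but now with the three features that make $p=2$ heavier: the membership test is the strict inequality $0<|z-1|_2<\tfrac12$ (equivalently $|z-1|_2\le\tfrac14$), the square-root criterion of Theorem \ref{tx2} is a congruence modulo $8$ rather than modulo $p$, and one always has $|\theta-1|_2=|\exp_2(J)-1|_2=|J|_2\le\tfrac14$ by Lemma \ref{el}. By Proposition \ref{pro21} we may assume $|q|_2\le\tfrac14$ throughout, since otherwise there is nothing to prove. Because $|4m|_2=\tfrac14|m|_2$, the common value $|\theta-1|_2=|q|_2$ can sit at $|m|_2$, $\tfrac12|m|_2$ or $\tfrac14|m|_2$, and these three positions are exactly what separates items 7, 8 and 6; this granularity in powers of $2$ is why nine cases appear.

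First I would record the one algebraic input that replaces Lemma \ref{l12} (which fails for $p=2$): if $|\theta-1|_2=|q|_2$, then writing $\theta-1$ and $q$ as $2^k$ times odd units one of $q\pm(\theta-1)$ is divisible by $4$ while the other is exactly divisible by $2$, so $|(\theta-1)^2-q^2|_2\le\tfrac18|q^2|_2$. I will use this gain of a factor $8$ together with the identity $D=(q-2m)^2-\bigl(q^2-(\theta-1)^2\bigr)$ and the product identity (\ref{z_1-1z_2-1}). For each case the first step is to read off $|D|_2$ from the strong triangle inequality. The second step tests $\sqrt D\in\mathbb Q_2$ via Theorem \ref{tx2}: writing $D=(\text{leading square})(1+c)$, the root exists precisely when $|c|_2\le\tfrac18$. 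In items 4 and 5 the correction $c$ has norm exactly $\tfrac14$, so $1+c\equiv5\pmod 8$ and no root exists; hence (\ref{kv}) has no solution at all, a fortiori none in $\mathcal E_2\setminus\{1\}$. In the remaining no-solution items 2 and 3 one either meets the same non-square obstruction or, when $\sqrt D$ does exist, checks directly that both $|z_{1,2}(m)-1|_2\ge\tfrac12$.

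When $\sqrt D$ exists I substitute $\sqrt D=(\text{leading})(1+\varepsilon)$ with $\varepsilon$ of small norm into (\ref{sol}) and estimate the numerator $N_\pm=(\theta-1)^2-2mq\pm(\theta-1)\sqrt D$. In the two-solution items 1, 6 and 8 the cancellations keep $|N_\pm|_2$ small enough that $|z_{1,2}(m)-1|_2\le\tfrac14$ for both signs, so both roots lie in $\mathcal E_2\setminus\{1\}$ (they are $\ne1$ automatically since $\theta\notin\{1\pm q\}$). The genuinely delicate regimes are the one-solution items 7 and 9: here $(\theta-1)^2$ and $(\theta-1)\sqrt D$ have equal norm, and by the parity of odd units exactly one of $(\theta-1)\pm\sqrt D$ is strictly smaller in norm than the other. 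For that sign a single term dominates $N_\pm$ with no cancellation, giving $|z-1|_2\ge\tfrac12$, so this root lies outside $\mathcal E_2$; feeding this value into (\ref{z_1-1z_2-1}) then forces the partner root to satisfy $|z'-1|_2\le\tfrac18<\tfrac12$, yielding exactly one admissible solution.

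The main obstacle is precisely these borderline regimes. For odd $p$ Lemma \ref{l12} guarantees that one sign choice in (\ref{sol}) has full norm, which immediately sorts the two roots; for $p=2$ this automatic dichotomy is gone, and the crude estimates only give $|z_\pm-1|_2\le\tfrac12$, inconclusive at the boundary. The resolution I would push through is the dominance-plus-product argument above: show that one sign makes a single term of the numerator dominate (so that root is demonstrably \emph{out}), and recover the other root from the product identity rather than by a second direct estimate. Care must also be taken that the factors of $2$ in $2m^2$, $2mq$ and $|4m|_2$ are accounted for consistently, since a single misplaced power of $2$ flips a root between $|z-1|_2=\tfrac14$ (in) and $|z-1|_2=\tfrac12$ (out).
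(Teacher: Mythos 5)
Your plan reproduces the paper's own proof essentially step for step: a case-by-case reading of $|D|_2$ from the strong triangle inequality, the mod-$8$ square-root criterion of Theorem \ref{tx2} (which kills cases 4) and 5) via $1+c\equiv 5\ (\mathrm{mod}\ 8)$), direct norm estimates of the numerators in (\ref{sol}) for the two-solution cases, and the product identity (\ref{z_1-1z_2-1}) to recover the admissible root in the one-solution cases 7) and 9), exactly as in the paper. The only slip is your uniform description of which sign is excluded in those two delicate cases: in 9) it is the root for which $(\theta-1)\pm\sqrt{D}$ has the \emph{larger} norm that is forced out (its term dominates the numerator), whereas in 7) the excluded root is the one with the smaller sign, where $-2mq$ dominates instead; this labelling detail is settled automatically once the norms are written out and does not affect any conclusion.
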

\begin{proof}
1) Let $|4m|_2>\max\{|\theta-1|_2,|q|_2\}$. Then it is clear that by Theorem \ref{tx2} there exists $\sqrt{D}$ and $\sqrt{D}=2m(1+2\varepsilon)$. Then for the solutions $z_{1,2}(m)$ we get
$$|z_{1,2}(m)-1|_2=\frac{\left|(\theta-1)^2-2mq\pm2m(\theta-1)(1+2\varepsilon)\right|_2}{\left|m^2\right|_2}\leq\frac{\max\{|q|_2,|\theta-1|_2\}}{|m|_2}<\frac{1}{4}.$$
It means that $z_{1,2}\in\mathcal E_2\setminus\{1\}$.

2) Let $|\theta-1|_2>\max\{|q|_2,|4m|_2\}$. Then we have
$$\left|D\right|_2=\left|(\theta-1)^2+4m^2-4mq\right|_2<\left|(\theta-1)^2\right|_2.$$
Hence
$$|z_{1,2}(m)-1|_2=\frac{\left|(\theta-1)^2\right|_2}{\left|2m^2\right|_2}=\frac{1}{2}$$
It means $z_{1,2}(m)\notin\mathcal E_2\setminus\{1\}$.\\
It easy to see that there does not exist $\sqrt{D}$ if $|\theta-1|_2=|m|_2$.\\
If $|\theta-1|_2>|m|_2$ then by Theorem \ref{tx2} there exists $\sqrt{D}$ and $\sqrt{D}=(\theta-1)(1+2\varepsilon)$. From non-Archimedean norm's property we get
$$|z_1(m)-1|_2=\frac{\left|(\theta-1)^2\right|_2}{\left|m^2\right|_2}>1,\qquad |z_2(m)-1|_2=1\qquad\mbox{if}\ |\varepsilon|_2<1$$
and
$$|z_1(m)-1|_2=1,\qquad |z_2(m)-1|_2=\frac{\left|(\theta-1)^2\right|_2}{\left|m^2\right|_2}>1\qquad\mbox{if}\ |\varepsilon|_2=1.$$

3) Let $|q|_2>\max\{|\theta-1|_2,|4m|_2\}$. If $\max\left\{\left|(\theta-1)^2\right|_2, \left|4m^2\right|_2\right\}\leq\left|4mq\right|_2$ then we have
$$\left|(\theta-1)^2D\right|_2\leq\left|(4mq)^2\right|_2.$$
Hence
$$|z_{1,2}(m)-1|_2=\frac{\left|(\theta-1)^2-2mq\pm(\theta-1)\sqrt{D}\right|_2}{\left|2m^2\right|_2}
=\frac{|2mq|_2}{\left|2m^2\right|_2}=\frac{|q|_2}{|m|_2}\geq1.$$
If $\max\left\{\left|(\theta-1)^2\right|_2, \left|4mq\right|_2\right\}\leq\left|4m^2\right|_2$ then we have
$$\left|(\theta-1)^2D\right|_2\leq\left|(4m^2)^2\right|_2$$
Consequently by non-Archimedean norm's property
$$|z_{1,2}(m)-1|_2\leq\frac{|2mq|_2}{\left|2m^2\right|_2}=\frac{|q|_2}{|m|_2}.$$
But from (\ref{z_1-1z_2-1}) we get
$$|(z_1(m)-1)(z_2(m)-1)|_2=\frac{\left|(\theta-1)^2-q^2\right|_2}{\left|m^2\right|_2}=\frac{\left|q^2\right|_2}{\left|m^2\right|_2}.$$
Thus we have
$$|z_{1,2}(m)-1|_2=\frac{|q|_2}{|m|_2}>\frac{1}{4}.$$
Let $\max\left\{\left|4mq\right|_2, \left|4m^2\right|_2\right\}<\left|(\theta-1)^2\right|_2$. If $|\theta-1|_2=|m|_2$ then from $|q|_2>|\theta-1|$ we get $\left|(\theta-1)^2\right|_2<|mq|_2$. Hence
$$|z_{1,2}(m)-1|_2=\frac{|2mq|_2}{\left|2m^2\right|_2}=\frac{|q|_2}{|m|_2}>\frac{1}{4}.$$
If $|\theta-1|_2>|m|_2$ then by Theorem \ref{tx2} there exists $\sqrt{D}$ if and only if $\left|(\theta-1)^2\right|_2<|mq|_2$. Consequently
$$|z_{1,2}(m)-1|_2=\frac{|2mq|}{\left|2m^2\right|_2}=\frac{|q|_2}{|m|_2}\geq\frac{1}{2}.$$
Thus we have shown that if $|q|_2>\max\{|\theta-1|_2,|4m|_2\}$ then the equation (\ref{kv}) has no solution in $\mathcal E_2\setminus\{1\}$.

4) Let $|4m|_2=|\theta-1|_2>|q|_2$. Then we have
$$D=4m^2\left(1+4\left(\frac{\theta-1}{4m}\right)^2-\frac{q}{m}\right)=4m^2(1+4+8\varepsilon)\qquad\mbox{where}\ |\varepsilon|_2\leq1.$$
By Theorem \ref{tx2} there does not exist $\sqrt{1+4+8\varepsilon}$ in $\mathbb Q_2$. Consequently, $\sqrt{D}$ does not exist in $\mathbb Q_2$.

5) Proof is a similar to the proof 5).

6) Let $|4m|_2=|\theta-1|_2=|q|_2$. In this case by Theorem \ref{tx2} there exists $\sqrt{D}$ in $\mathbb Q_2$ and $D=4m^2(1+2\varepsilon)^2$. For the solutions $z_{1,2}(m)$ we get
$$|z_{1,2}(m)-1|_2=\frac{\left|(\theta-1)^2-2mq\pm2m(\theta-1)(1+2\varepsilon)\right|_2}{\left|2m^2\right|_2}\leq\frac{|q|_2}{|m|_2}=\frac{1}{4}.$$
This means that $z_{1,2}\in\mathcal E_2\setminus\{1\}$.

7) $|m|_2=|\theta-1|_2=|q|_2>|4m|_2$ and $\theta\neq1\pm q$. In this case $\sqrt{D}$ exists and $\sqrt{D}=(\theta-1)(1+2\varepsilon)$. Consequently there exist solutions $z_{1,2}(m)$ in $\mathbb Q_2$.
It is easy to see that if $|a|_2=|b|_2$ than $|a\pm b|_2\leq|2a|_2$. Using this property to (\ref{z_1-1z_2-1}) we have
$$|(z_1(m)-1)(z_2(m)-1)|_2=\frac{\left|(\theta-1)^2-q^2\right|_2}{\left|m^2\right|_2}\leq\frac{\left|4m^2\right|_2}{\left|m^2\right|_2}=\frac{1}{4}.$$
Hence,
$$|z_1(m)-1|_2=1,\qquad |z_2(m)-1|_2\leq\frac{1}{4}\qquad\mbox{if}\ |\varepsilon|_2=1$$
and
$$|z_2(m)-1|_2=1,\qquad |z_1(m)-1|_2\leq\frac{1}{4}\qquad\mbox{if}\ |\varepsilon|_2<1.$$
This means that in this case the equation (\ref{kv}) has only one solution in $\mathcal E_2\setminus\{1\}$.

8) Let $|m|_2>|\theta-1|_2=|q|_2>|4m|_2$. If there exists $\sqrt{D}$ then it holds inequality $\left|\sqrt{D}\right|_2<\left|2(\theta-1)\right|_2$. From this
$$|z_{1,2}(m)-1|_2=\frac{\left|(\theta-1)^2-2mq\pm(\theta-1)\sqrt{D}\right|_2}{\left|2m^2\right|_2}\leq\frac{\left|(\theta-1)^2\right|_2}{\left|m^2\right|_2}=\frac{1}{4}.$$

9) Let $|\theta-1|_2=|q|_2>|m|_2$. In this case by Theorem \ref{tx2} there exists $\sqrt{D}$ and $\sqrt{D}=(\theta-1)(1+2\varepsilon)$ where $|\varepsilon|_2\leq1$.
If $|\varepsilon|_2=1$ we have
$$|z_2(m)-1|_2=\frac{\left|(\theta-1)^2\right|_2}{\left|m^2\right|_2}>1$$
and
$$|z_1(m)-1|_2=\frac{\left|(\theta-1)^2-q^2\right|_2}{\left|m^2\right|_2}\cdot\frac{\left|m^2\right|_2}{\left|(\theta-1)^2\right|_2}
\leq\frac{\left|4(\theta-1)^2\right|_2}{\left|(\theta-1)^2\right|_2}=\frac{1}{4}.$$
If $|\varepsilon|_2<1$ then we get
$$|z_1(m)-1|_2=\frac{\left|(\theta-1)^2\right|_2}{\left|m^2\right|_2}>1\qquad\mbox{and}\qquad |z_2(m)-1|_2\leq\frac{1}{4}.$$
Thus we have shown that in this case the equation (\ref{kv}) has only one solution in $\mathcal E_2\setminus\{1\}$.
\end{proof}
\begin{cor}\label{cor21}
Let $p=2$.
\begin{itemize}
\item[1)] Let $|q|_2>\frac{1}{4}$. If $|m|_2=1$ then the equation (\ref{kv}) has no solutions in $\mathcal E_p$.
\item[2)] Let $|q|_2=\frac{1}{4}$. If $|m|_2=1$ then the equation (\ref{kv}) has solution in $\mathcal E_p$ if and only if $|\theta-1|_2=\frac{1}{4}$. Furthermore the equation (\ref{kv}) has two solutions if $\theta\notin\{1-q,1+q\}$ and it has one solution if $\theta\in\{1-q,1+q\}$.
\item[3)] Let $|q|_2<\frac{1}{4}$. If $|m|_2=1$ then the equation has solution in $\mathcal E_p$ if and only if $|\theta-1|_2<\frac{1}{4}$. Furthermore the equation (\ref{kv}) has two solutions if $\theta\notin\{1-q,1+q\}$ and it has one solution if $\theta\in\{1-q,1+q\}$.
\end{itemize}
\end{cor}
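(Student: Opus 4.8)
The plan is to read each assertion off Propositions \ref{pro21}, \ref{pro22} and \ref{pro23} after recording the one arithmetic simplification that drives the whole argument: since $|m|_2=1$ we have $|4m|_2=|4|_2\,|m|_2=\tfrac14$. Consequently, in the regime $|m|_2=1$ the three quantities $|4m|_2$, $|\theta-1|_2$, $|q|_2$ controlling the case distinctions of Proposition \ref{pro23} collapse to comparing $|\theta-1|_2$ and $|q|_2$ with the fixed value $\tfrac14$. Moreover the cases of Proposition \ref{pro23} that require $|q|_2>|4m|_2$ or $|q|_2=|m|_2$ (namely cases 7)--9)) are vacuous here, because in parts 2) and 3) one has $|q|_2\le\tfrac14<1=|m|_2$ while $|4m|_2=\tfrac14$.

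First I would dispose of statement 1): it is the immediate specialization of Proposition \ref{pro21}, which already asserts the absence of solutions in $\mathcal E_2\setminus\{1\}$ for every admissible $m$ as soon as $|q|_2>\tfrac14$, so the extra hypothesis $|m|_2=1$ is not even used. For statement 2) I would fix $|q|_2=\tfrac14$, whence $|4m|_2=|q|_2=\tfrac14$, and split on whether $\theta\in\{1-q,1+q\}$. If $\theta\notin\{1-q,1+q\}$, I compare $|\theta-1|_2$ with $\tfrac14$: the subcase $|\theta-1|_2>\tfrac14$ is part 2) of Proposition \ref{pro23} (no solution), the subcase $|\theta-1|_2<\tfrac14$ is part 5) (no solution), and the subcase $|\theta-1|_2=\tfrac14$ is exactly part 6), giving the two solutions $z_1(m),z_2(m)\in\mathcal E_2\setminus\{1\}$. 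Hence a solution exists precisely when $|\theta-1|_2=\tfrac14$, and then there are two. If instead $\theta\in\{1-q,1+q\}$, then $|\theta-1|_2=|q|_2=\tfrac14$ holds automatically, and since $|m|_2=1>\tfrac14=|q|_2$, Proposition \ref{pro22} yields the single solution $z_2(m)$; this agrees with the criterion $|\theta-1|_2=\tfrac14$.

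For statement 3) I would fix $|q|_2<\tfrac14$, so that $|4m|_2=\tfrac14>|q|_2$, and again split on $\theta$. For $\theta\notin\{1-q,1+q\}$ the trichotomy in $|\theta-1|_2$ reads as follows: $|\theta-1|_2<\tfrac14$ is part 1) of Proposition \ref{pro23} (two solutions, since then $|4m|_2>\max\{|\theta-1|_2,|q|_2\}$); $|\theta-1|_2=\tfrac14$ is part 4) (no solution, as $|4m|_2=|\theta-1|_2>|q|_2$); and $|\theta-1|_2>\tfrac14$ is part 2) (no solution). Thus a solution exists iff $|\theta-1|_2<\tfrac14$, with two solutions in that case. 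For $\theta\in\{1-q,1+q\}$ we once more have $|\theta-1|_2=|q|_2<\tfrac14$ automatically and $|m|_2=1>|q|_2$, so Proposition \ref{pro22} supplies exactly one solution $z_2(m)$, again matching the criterion $|\theta-1|_2<\tfrac14$.

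The argument is essentially bookkeeping, so the only genuine care needed is to confirm that the case list of Proposition \ref{pro23} is exhaustive under $|m|_2=1$ and that no ``mixed'' case (one phrased in terms of $|m|_2$ rather than $|4m|_2$) can intervene; this is exactly what the vacuity of cases 7)--9) noted above guarantees. Beyond keeping the inequalities relating $|\theta-1|_2$ to $\tfrac14$ correctly aligned with the enumerated cases, I do not anticipate any real obstacle.
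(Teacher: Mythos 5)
Your proposal is correct and follows exactly the route the paper intends: Corollary \ref{cor21} is stated without proof as an immediate specialization of Propositions \ref{pro21}--\ref{pro23}, and your bookkeeping (using $|4m|_2=\tfrac14$ when $|m|_2=1$, checking that the relevant cases of Proposition \ref{pro23} are exhaustive and that cases 7)--9) are vacuous, and invoking Proposition \ref{pro22} when $\theta\in\{1-q,1+q\}$) is precisely the verification required. The only remark worth adding is that since $\theta=\exp_2(J)$ with $|J|_2\leq\tfrac14$, the subcase $|\theta-1|_2>\tfrac14$ you route through Proposition \ref{pro23}, part 2) is in fact empty, which does not affect the conclusion.
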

By Corollary \ref{cor1} and by Propositions \ref{pro21}-\ref{pro23} we get the following
\begin{thm}\label{tigm2}
Let $p=2$.
\begin{itemize}
\item[1)] For a given $m\leq[q/2]$ there exist $2{q \choose m}$ of TIpGMs if at least one of the following conditions is satisfied\\
1a) $|4m|_2>\max\{|\theta-1|_2,|q|_2\}$ and $\theta\notin\{1-q,1+q\}$\\
1b) $|4m|_2=|\theta-1|_2=|q|_2$ and $\theta\notin\{1-q,1+q\}$\\
1c) $|m|_2>|\theta-1|_2=|q|_2>|4m|_2,\ q\neq2m,\ \theta\notin\{1-q,1+q\}$ and there exists $\sqrt{1-2a+b^2}$, where $a=\frac{q}{2m},\ b=\frac{\theta-1}{2m}$.\\
\item[2)] For a given $m\leq[q/2]$ there exist ${q \choose m}$ of TIpGMs if at least one of the following conditions is satisfied\\
2a) $|4m|_2>\max\{|\theta-1|_2,|q|_2\}$ and $\theta\in\{1-q,1+q\}$\\
2b) $|4m|_2=|\theta-1|_2=|q|_2$ and $\theta\in\{1-q,1+q$\\
2c) $|m|_2=|\theta-1|_2=|q|_2>|4m|_2$ and $\theta\notin\{1-q,1+q\}$\\
2d) $|m|_2>|\theta-1|_2=|q|_2>|4m|_2$ and $\theta\in\{1-q,1+q\}$\\
2e) $|\theta-1|_2=|q|_2>|m|_2$ and $\theta\notin\{1-q,1+q\}$\\
2f) $|m|_2>|\theta-1|_2=|q|_2>|4m|_2,\ q=2m,\ \theta\notin\{1-q,1+q\}$ and there exists $\sqrt{b^2-1}$, where $b=\frac{\theta-1}{q}$.\\

\item[3)] Otherwise there does not exist any TIpGM.
\end{itemize}
\end{thm}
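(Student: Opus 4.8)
The plan is to read off the theorem from the correspondence in Corollary \ref{cor1} together with the solution counts established in Propositions \ref{pro21}--\ref{pro23}. All three propositions compute the same quantity, namely the number of roots of the quadratic (\ref{kv}) lying in $\mathcal E_2\setminus\{1\}$ for a fixed triple $(m,q,\theta)$; the theorem merely translates these root counts into numbers of measures.

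First I would recall, from Corollary \ref{cor1}, that every TIpGM comes from a solution $z^*\in\mathcal E_2\setminus\{1\}$ of the scalar equation (\ref{rm}) for a single $m\in\{1,\dots,[q/2]\}$, and that each such $z^*$ yields exactly ${q\choose m}$ measures (one for each $m$-element subset $M$). Since (\ref{rm}), after cancelling the factor $z-1$, is the quadratic (\ref{kv}), counting TIpGMs for a fixed $m$ reduces to counting the roots of (\ref{kv}) in $\mathcal E_2\setminus\{1\}$ and multiplying by ${q\choose m}$.

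Next I would run the trichotomy underlying Propositions \ref{pro21}--\ref{pro23}. If $|q|_2>\frac14$ then Proposition \ref{pro21} furnishes no admissible root, hence no measure. If $\theta\in\{1-q,1+q\}$ then Proposition \ref{pro22} gives a single admissible root precisely when $|m|_2>|q|_2$ (with $q\neq 2m$, so that $z_2(m)\neq1$), and splitting this regime by comparing $|4m|_2$ with $|q|_2=|\theta-1|_2$ produces exactly conditions 2a, 2b, 2d. If $\theta\notin\{1-q,1+q\}$ then the nine cases of Proposition \ref{pro23} sort into two admissible roots (cases 1, 6, 8), one admissible root (cases 7, 9), or none (cases 2--5); matching the two-root cases to 1a, 1b, 1c and the one-root cases to 2c, 2e, then multiplying each root count by ${q\choose m}$, gives the tallies $2{q\choose m}$ and ${q\choose m}$, while every remaining configuration gives no root and supplies the ``otherwise'' clause.

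The step requiring the most care, and the main obstacle, is the boundary $q=2m$, which occurs only for even $q$ with $m=q/2$. There the two roots of (\ref{kv}) have product $(q-m)^2/m^2=1$, so they are reciprocal, $z_2=z_1^{-1}$. By Proposition \ref{tp} we have $\mu_{h(M)\mathbf{1}_M}=\mu_{h(M^c)\mathbf{1}_{M^c}}$ with $h(M^c)=-h(M)$; since $|M|=|M^c|=m$ and $z(M^c)=z(M)^{-1}$, the ${q\choose m}$ measures arising from $z_1$ coincide, under the involution $M\mapsto M^c$, with those arising from $z_2$. Hence a reciprocal pair of roots contributes only ${q\choose m}$ distinct measures, not $2{q\choose m}$. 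This is exactly why case 8 of Proposition \ref{pro23} must be split into condition 1c (where $q\neq2m$, giving $2{q\choose m}$) and condition 2f (where $q=2m$, giving ${q\choose m}$). I would finish by checking that in every other two-root configuration the valuation constraints force $q\neq2m$, so no such collapse occurs, and that the listed conditions are mutually exclusive and exhaust all admissible $(m,q,\theta)$, so that their complement is precisely the no-measure case.
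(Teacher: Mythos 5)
Your proposal is correct and follows essentially the same route as the paper, whose entire proof is the one-line citation ``By Corollary \ref{cor1} and by Propositions \ref{pro21}--\ref{pro23} we get the following''; your matching of the cases of Proposition \ref{pro23} to conditions 1a--1c, 2c, 2e and of Proposition \ref{pro22} to 2a, 2b, 2d is exactly what that citation encodes. Your explicit treatment of the boundary $q=2m$ (reciprocal roots collapsing under $M\mapsto M^c$ via Proposition \ref{tp}, which is why case 8 splits into 1c and 2f) supplies a justification the paper leaves implicit, and is the right one.
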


\subsection{\bf Boundedness of translation-invariant $p$-adic Gibbs measures}

Now we shall study the problem of boundedness of translation-invariant $p$-adic Gibbs measures. Note that if $q\notin p\mathbb N$ then by Theorems \ref{tigm1},\ref{tigm2} there exists only one translation-invariant $p$-adic Gibbs measure $\mu_0$. In \cite{MRasos} it have been proven that $p$-adic Gibbs measure $\mu_0$ is bounded if and only if $q\notin p\mathbb N$. Assume that $m\in\{1,2,...,[q/2]\}$ and $z(m)\in\mathcal E_p\setminus{1}$ is a solution to the equation (\ref{rm}). We shall show that corresponding $p$-adic Gibbs measure $\mu_{h(m)}$ is not bounded. Since
$$\left|\mu_{h(m)}^{(n)}(\sigma)\right|_p=\frac{\left|\exp_p\left(H_n(\sigma)+\sum_{x\in W_n}h(m){\bf1}(\sigma(x)\leq m)\right)\right|_p}{\left|Z_{n,h(m)}\right|_p}=\frac{1}{\left|Z_{n,h(m)}\right|_p}$$
We shall show that
$$\left|Z_{n,h(m)}\right|_p\to 0,\qquad n\to\infty.$$
For the normalizing constant we have the following recurrence formula \cite{MRasos}
\begin{equation}\label{rec}
Z_{n+1,h}=A_{n,h}Z_{n,h},\qquad\mbox{where}\ \ A_{n,h}=\prod_{x\in W_n}a_h(x).
\end{equation}
For the solution $z(m)\in\mathcal E_p\setminus\{1\}$ to the equation (\ref{rm}) we have
$$a_{h(m)}(x)=(m(z(m)-1)+q+\theta-1)^2,\qquad\mbox{where}\ \ z(m)=\exp_p(h(m)).$$
Then by (\ref{rec}) we get
$$Z_{n+1,h(m)}=(m(z(m)-1)+q+\theta-1)^{2|V_n|}.$$
From this considering $|z(m)-1|_p<1,\ |\theta-1|_p<1$ and $q\in p\mathbb N$ we have
$$\left|Z_{n+1,h(m)}\right|_p<p^{-2|V_n|}.$$
Hence,
$$\left|Z_{n,h(m)}\right|_p\to 0,\qquad n\to\infty.$$
Thus we have proved the following
\begin{thm} Translation-invariant $p$-adic Gibbs measures for the Potts model on the Cayley tree of order two are bounded if and only if $q\notin p\mathbb N$.
\end{thm}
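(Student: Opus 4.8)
The plan is to split according to whether $q\in p\mathbb N$, since the classification already carried out in Theorems \ref{tigm1} and \ref{tigm2} makes the set of TIpGMs completely explicit in each regime. When $q\notin p\mathbb N$ those theorems show that the only solution of \eqref{rm} lying in $\mathcal E_p$ is the trivial one, so there is a single TIpGM $\mu_0$; the computation cited from \cite{MRasos} already establishes that this $\mu_0$ is bounded precisely when $q\notin p\mathbb N$. Thus the forward implication, and the whole case $q\notin p\mathbb N$, follows at once from the classification together with the known fact about $\mu_0$.

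The substantive work is the reverse direction: assuming $q\in p\mathbb N$, I must show that every TIpGM is unbounded, that is, both $\mu_0$ and each nontrivial $\mu_{h(m)}$ with $z(m)\in\mathcal E_p\setminus\{1\}$, $m\in\{1,\dots,[q/2]\}$. The first reduction I would make is to observe that the numerators in \eqref{p*} never contribute to the norm: since the coupling $J$ and all boundary fields lie in the domain of convergence of $\exp_p$, Lemma \ref{el} gives $|\exp_p(\,\cdot\,)|_p=1$, so that for every configuration
\[
\bigl|\mu_{h(m)}^{(n)}(\sigma)\bigr|_p=\frac{1}{\bigl|Z_{n,h(m)}\bigr|_p}.
\]
Hence the boundedness question reduces entirely to the asymptotics of the normalizing constant $Z_{n,h(m)}$, and $\mu_{h(m)}$ is unbounded as soon as $|Z_{n,h(m)}|_p\to 0$.

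To control $Z_{n,h(m)}$ I would use the multiplicative recurrence \eqref{rec}, namely $Z_{n+1,h}=A_{n,h}Z_{n,h}$ with $A_{n,h}=\prod_{x\in W_n}a_h(x)$, and compute the per-vertex factor $a_{h(m)}(x)$ explicitly. Because the field is translation invariant, every vertex of $W_n$ contributes the same factor, which I expect to simplify to $\bigl(m(z(m)-1)+q+\theta-1\bigr)^2$, giving $Z_{n+1,h(m)}=\bigl(m(z(m)-1)+q+\theta-1\bigr)^{2|V_n|}$. The final step is a single norm estimate: since $z(m)\in\mathcal E_p$ forces $|z(m)-1|_p<1$, since $|m|_p\le 1$, since $q\in p\mathbb N$ forces $|q|_p<1$, and since $|\theta-1|_p=|\exp_p(J)-1|_p=|J|_p<1$ by Lemma \ref{el}, the strong triangle inequality yields $|m(z(m)-1)+q+\theta-1|_p\le p^{-1}$. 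Therefore $|Z_{n+1,h(m)}|_p\le p^{-2|V_n|}\to 0$, and the same estimate with $m=0$, where the base of the factor is simply $q+\theta-1$, disposes of the trivial measure $\mu_0$ uniformly.

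I expect the only delicate point to be the explicit evaluation of $a_{h(m)}(x)$ from the recurrence and the verification that it factors as the claimed perfect square; once that identity is in hand, the norm bound is forced term by term by the three strict inequalities above, and the exponential decay $p^{-2|V_n|}$ is automatic. A secondary point to watch is that the estimate must be \emph{strict} ($\le p^{-1}$ rather than merely $\le 1$), which is exactly what the hypotheses $z(m)\in\mathcal E_p$, $q\in p\mathbb N$ and $J\in B(0,p^{-1/(p-1)})$ guarantee; and one should note that the case $q\notin p\mathbb N$ could equally be deduced by the same computation, since there $|q+\theta-1|_p=1$ and the factor stays a $p$-adic unit, so that $|Z_{n,0}|_p$ does not decay.
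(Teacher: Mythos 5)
Your proposal is correct and follows essentially the same route as the paper: reduce to the norm of the normalizing constant via $|\mu_{h(m)}^{(n)}(\sigma)|_p=1/|Z_{n,h(m)}|_p$, use the recurrence $Z_{n+1,h}=A_{n,h}Z_{n,h}$ with $a_{h(m)}(x)=(m(z(m)-1)+q+\theta-1)^2$, and conclude $|Z_{n,h(m)}|_p\to0$ from the three strict norm bounds when $q\in p\mathbb N$. The only cosmetic difference is that you treat $\mu_0$ by the same computation with $m=0$, whereas the paper simply cites the known result from \cite{MRasos} for that case.
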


\subsection{\bf The number of TIpGMs} Denote by $\mathcal N_{TI}$ the number of all translation-invariant $p$-adic Gibbs measures for the $q$-state $p$-adic Potts model on the Cayley tree of order two. Note that $\mathcal N_{TI}$ depends on the parameter $\theta$ (since $\theta=\exp_p(J)$ it depends on $J$). Since the TIpGM $\mu_0$ exists independently on parameters,  the set of all TIpGMs is not empty.

{\bf 1)} Let $q\notin p\mathbb N$. In this case by Theorem \ref{tigm1} there exists a unique translation-invariant $p$-adic Gibbs measure $\mu_0$, i.e. $\mathcal N_{TI}=1$.

{\bf 2)} Let $q=p>2$ (If $q=p=2$ we get $2$-adic Ising model. It is known (see \cite{MRasos}) that for the Ising model there exists a unique $p$-adic Gibbs measure which is translation-invariant. So, $\mathcal N_{TI}=1$). Then for any integer number $m\in\{1,2,...,[q/2]\}$ it holds $|m|_p>|q|_p\geq|\theta-1|_p$. By Theorem \ref{tigm1}
for the integer number $m\leq[q/2]$ there are $2{q \choose m}$ of TIpGMs if $\theta\notin\{1-q,1+q\}$ and there are ${q \choose m}$ if $\theta\in\{1-q,1+q\}$.

Using
$${q \choose m}={q \choose q-m},\qquad \sum_{m=1}^q{q \choose m}=2^q-1$$
we get
$$\mathcal N_{TI}=1+2\sum_{m=1}^{[q/2]}{q \choose m}=2^q-1,\qquad\mbox{ if }\theta\notin\{1-q,1+q\}$$
and
$$\mathcal N_{TI}=1+\sum_{m=1}^{[q/2]}{q \choose m}=2^{q-1},\qquad\mbox{ if }\theta\in\{1-q,1+q\}.$$

{\bf 3)} Let $p>2$ and $q=pn,\ n\in\{2,p-1\}$. Then
$$|m|_p>|q|_p\geq|\theta-1|_p,\qquad\mbox{if}\ \ m\in\{1,2,...,[pn/2]\}\setminus\{p,2p,...,[n/2]p\}$$
and
$$|m|_p=|q|_p\geq|\theta-1|_p\qquad\mbox{if}\ \ m\in\{p,2p,...,[pn/2]\}.$$
By Theorem \ref{tigm1} similarly as proof of Proposition 2 in \cite{KRK}, one can show that
\[ \mathcal N_{TI}=\left\{\begin{array}{ll}
2^q-1-2\sum_{s=1}^{[n/2]}{pn\choose ps},&\mbox{ if }\ \theta\notin\{1-q,1+q\}\mbox{ and }q\mbox{ is } odd\\[3mm]
2^q-1+{q\choose [q/2]}-2\sum_{s=1}^{[n/2]}{pn\choose ps},&\mbox{ if }\ \theta\notin\{1-q,1+q\}\mbox{ and }q\mbox{ is } even\\[3mm]
2^{q-1}-\sum_{s=1}^{[n/2]}{pn\choose ps},&\mbox{ if }\ \theta\in\{1-q,1+q\}\mbox{ and }q\mbox{ is } odd\\[3mm]
2^{q-1}+{q\choose [q/2]}-\sum_{s=1}^{[n/2]}{pn\choose ps},&\mbox{ if }\ \theta\in\{1-q,1+q\}\mbox{ and }q\mbox{ is } even\\[3mm]
\end{array}\right.\]

{\bf 4)} Let $p>2$ and $q=p^sn$, where $s>1,\ n\in\{1,...,p-1\}$. If $n=1$ then there are at most $2^q-1$ of TIpGMs. Note that $\mathcal N_{TI}=2^q-1$ if and only if $0<\left|(\theta-1)^2-q^2\right|_p\leq\left|q^2\right|_p$.

If $1<n\leq p-1$ and $n$ is odd then there are at most $2^q-1-2\sum_{m=1}^{[n/2]}{p^sn\choose p^sm}$ of TIpGMs.

If $1<n\leq p-1$ and $n$ is even then there are at most $2^q-1+{q\choose [q/2]}-2\sum_{m=1}^{[n/2]}{p^sn\choose p^sm}$ of TIpGMs.

{\bf 5)} Let $p=2$ and $|q|_2>\frac{1}{4}$. Then by Theorem \ref{tigm2} there exists a unique TIpGMs. Thus, in this case $\mathcal N_{TI}=1$.

{\bf 6)} Let $p=2$ and $q=4$. Then there are at most 15 of TIpGMs. If $\sqrt{(\theta-5)(\theta+3)}$ exists in $\mathbb Q_2$ then there exist 15 of TIpGMs. By Theorem \ref{tx2} the number $\sqrt{(\theta-5)(\theta+3)}$ exists if and only if
$$
\theta\in\left\{x\in\mathbb Q_2:\ |x-29|_2\leq\frac{1}{128}\right\}\bigcup\left\{x\in\mathbb Q_2:\ |x-93|_2\leq\frac{1}{256}\right\}\bigcup
$$
$$\left\{x\in\mathbb Q_2:\ |x-165|_2\leq\frac{1}{256}\right\}\bigcup\bigcup_{s=1}^\infty\left\{x\in\mathbb Q_2:\ |x-5-2^s|_2\leq\frac{1}{2^{s+3}}\right\}.$$


\begin{thebibliography}{99}

\bibitem{14}  Freund P. G. O., Olson M., \textit{Non-Archimedean Strings},  Phys. Lett. {\bf 199}, (1987), 186--190.

\bibitem{GRR} Gandolfo D., Rozikov U. A., Ruiz J., \textit{On $p$-adic Gibbs Measures for Hard Core Model on a Cayley Tree}, Markov Process. Related Fields, {\bf 18}, (2012) 701--720.

\bibitem{16} Ganikhodjaev N.N.,  Mukhamedov F.M., Rozikov U.A., \textit{Phase Transitions in the Ising Model on $Z$ over
the $p$-adic Number Field}, Uzb. Mat. Zh., No. 4, (1998),
23--29.

\bibitem{TMP2} Khakimov O. N., \textit{$p$-Adic Gibbs measures for the model of Hard Spheres with three states on the Cayley tree}, Theor. Math. Phys. \textbf{177}(1) (2013), 1339--1351.

\bibitem{k91}  Khrennikov A. Yu., \textit{$p$-Adic quantum mechanics with $p$-adic valued functions}, J.Math.Phys. {\bf 32}, (1991), 932--936.

\bibitem{29} Koblitz N., \textit{$p$-Adic Numbers, $p$-adic Analysis, and Zeta-Functions}, (Springer, Berlin, 1977).

\bibitem{KRK}
K\"ulske C., Rozikov U. A., Khakimov R. M.,  \textit{Description of all translation-invariant (splitting) Gibbs measures
for the Potts model on a Cayley tree},
Arxiv 1310.6220.

\bibitem{33} Ludkovsky S., Khrennikov A. Yu., \textit{Stochastic Processes on Non-Archimedean Spaces with Values in
Non-Archimedean Fields}, Markov Processes Relat. Fields, {\bf
9}, (2003), 131--162.

\bibitem{MRasos} Mukhamedov F. M., Rozikov U. A., \textit{On Gibbs measures of $p$-adic Potts model on the Cayley tree}, Indag. Math. \textbf{15}(1) (2004), 85--100.

\bibitem{37} Mukhamedov F.M., Rozikov U.A., \textit{On Inhomogeneous $p$-adic Potts Model on a Cayley Tree}, Infin. Dimens.
Anal. Quantum Probab. Relat. Top. {\bf 8}, (2005), 277--290.

\bibitem{Fg} Mukhamedov F.M. \textit{On the Existence of Generalized Gibbs Measures
for the One-Dimensional $p$-adic Countable State Potts Model}, Proc. Steklov Inst. Math. {\bf 265} (2009), 165--176.

 \bibitem{MNGM} Mukhamedov F.M. \textit{On dynamical systems and phase transitions for $q+1$-state
$p$-adic Potts model on the Cayley tree}. Math.Phys.Anal.Geom., {\bf 16} (2013), 49--87.

\bibitem{RK} Rozikov U. A., Khakimov O. N., \textit{$p$-Adic Gibbs measures and Markov random fields on countable graphs}, Theor. Math. Phys. \textbf{175}(1) (2013), 518-–525.

\bibitem{bookroz} Rozikov U.A., \textit{Gibbs Measures on Cayley Trees}, World Sci. Publ. Singapore. (2013), 404 pp.

\bibitem{sc} Schikhof W.H., \textit{Ultrametric calculus}, Cambridge Univ. Press, Cambridge.
(1984).

\bibitem{48} Vladimirov V.S.,  Volovich I. V., Zelenov E. V., \textit{$p$-Adic Analysis and Mathematical Physics}, (World Sci., Singapore, 1994).

\bibitem{49} Volovich I. V., \textit{Number Theory as the Ultimate Physical Theory}, Preprint No. TH 4781/87 (CERN, Geneva,
1987).

  \end{thebibliography}
\end{document}